\documentclass[11pt]{article}

\input{conf}

\title{Tight Approximation 
Results for Matroid Optimization with a Linear Constraint}

\author{Ilan Doron-Arad\thanks{Math Department, MIT, Cambridge, Massachusetts, USA. \texttt{ilanda@mit.edu}} \and
Hadas Shachnai\thanks{Computer Science Department, Technion, Haifa, Israel. \texttt{hadas@cs.technion.ac.il}} \and 
Gilad Shmerler\thanks{Computer Science Department, Technion, Haifa, Israel. \texttt{shmerler@campus.technion.ac.il}}
}
\date{}

\begin{document}
\maketitle

\begin{abstract}
We study the following class of \emph{matroid optimization problems with a linear constraint} ($\mathcal{P}$-MOL). Given a matroid $\mathcal{M}=(E,\mathcal{I})$, two nonnegative weight functions $v,w:E\to\mathbb{R}_{\ge 0}$, and a threshold $L\in\mathbb{R}_{\ge 0}$, find $\textsf{opt}\, v(S)$ where $S$ is either an independent set or a base of $\mathcal{M}$ satisfying a budget-type constraint:  $w(S)\le L$ or $w(S)\ge L$, and $\textsf{opt}\in\{\min,\max\}$.
$\mathcal{P}$-MOL provides a unified representation for a broad family of NP-hard optimization problems, including
{\sc budgeted matroid independent set}, {\sc constrained minimum-basis}, 
and {\sc knapsack-cover} variants with a matroid constraint. Also, it naturally extends to multiple matroid constraints. In particular, we consider the {\sc matroid intersection cover} (MIC) problem, where feasibility is defined by the common
independent sets of two matroids and one seeks minimum $v(S)$ subject to $w(S)\ge L$.

Our main result is a unified {\em efficient polynomial-time approximation scheme} (EPTAS) for all nontrivial \pmol\ variants, obtained by generalizing a technique of Hassin and Levin (SIAM J.\ Comput., 2004) for solving the {\sc constrained minimum spanning tree} problem.
Specifically, for any fixed $\varepsilon > 0$, we present an algorithm running in time $|E|^{O(1)} \cdot (1/\varepsilon^2)^{O(1/\varepsilon)}$ that outputs a feasible solution $S$ whose value is at most $(1+\varepsilon)\textsf{OPT}$ for minimization variants and at least $(1-\varepsilon)\textsf{OPT}$ for maximization variants.
This resolves the complexity status of all members of $\mathcal{P}$-MOL,
as none of these problems admits a {\em fully} polynomial-time approximation scheme (Doron-Arad, Kulik and Shachnai, ICALP'24).
Finally, we separate the $\mathcal{P}$-MOL family from its extension to matroid intersection. We show that an EPTAS is unlikely to exist for the matroid intersection variant of $\mathcal{P}$-MOL under a covering constraint, whereas an EPTAS is known to exist under a budget constraint. This highlights a qualitative difference between these two types of linear constraints that does not arise in the single-matroid setting.
\end{abstract}

\thispagestyle{empty}
\clearpage

\setcounter{page}{1}

\section{Introduction}

Matroids offer a unifying abstraction for ``independence'' in combinatorial structures: they capture linear
independence in vector spaces, forests in graphs, and many other systems constrained by an exchange principle. 
A {\em matroid} is a set system $(E, \cI)$, where $E$ is a finite set and $\cI \subseteq 2^E$ are the {\em independent sets (IS)} such that $(i)$ $\emptyset \in \cI$, $(ii)$ for all $A \in \cI$ and $B \subseteq A$ it holds that $B \in \cI$, and $(iii)$ for all $A,B \in \cI$ where $|A| > |B|$, there is $e \in A \setminus B$ such that $B \cup \{e\} \in \cI$.\footnote{Properties $(ii)$ and $(iii)$ are known, respectively,  as {\em hereditary property}, and {\em exchange property}.}
A broad range of classical problems can be cast as \emph{matroid optimization}: one seeks a minimum-cost or maximum-value independent set or basis of a matroid.
Canonical examples include minimum spanning trees, maximum-weight forests, and a variety of assignment and
scheduling problems.

In many applications, however, matroid feasibility alone is insufficient. Solutions must also satisfy an
additional \emph{budget-type} constraint, such as an upper bound on total length or cost, or a lower bound on
total coverage. Adding even a single linear constraint of this form typically turns an otherwise polynomial-time
matroid optimization problem into an NP-hard one, and has motivated a long line of work on approximation algorithms.

\subsection{The \pmol Class and Our Framework}
We start by defining the class of problems studied in this paper.
Let $\scm = (E,\sci)$ be a matroid, and let $v,w : E \to \drz$ be two nonnegative functions. The value $v_e$ contributes to the main objective, and $w_e$ contributes to a budget constraint. For the matroid $\scm = (E,\sci)$ we define $\IS(\scm) := \sci$ and $\bases(\scm) := \{\, S \in \sci : |S| = \rank(\scm) \,\}$, where $\rank(\scm) := \max\{\,|T| : T \in \sci\,\}$ is the rank of $\scm$, i.e., the maximum cardinality of an independent set. 

We are interested in three design choices:
\begin{itemize}
  \item \textbf{Objective direction.}
        We may want to minimize or maximize $v(S)$.\footnote{For any vector $p = (p_e)_{e \in E}$ and any $S \subseteq E$ we use the shorthand $p(S) := \sum_{e \in S} p_e$.}
  \item \textbf{Matroid family.}
        We may optimize over independent sets or bases of $\scm$.
  \item \textbf{Budget direction.}
        The budget constraint may take the form $w(S) \le L$ or $w(S) \ge L$.
\end{itemize}
We encode these choices by three parameters
\begin{equation*}
  \opt \in \{\min,\max\},
  \qquad
  \scf \in \{\IS, \bases\},
  \qquad
  \triangleright \in \{\le,\ge\}.
\end{equation*}
Given a threshold $L \ge 0$, we consider the optimization problem
\begin{equation}\label{eq:pmol-def}
  \opt\; v(S)
  \qquad\text{s.t.}\qquad
  S \in \scf,
  \quad
  w(S)\;\triangleright\; L.
\end{equation}
We refer to \eqref{eq:pmol-def} as a \pmol (Matroid Optimization with a Linear constraint) problem, with the descriptor $\scp = (\opt, \scf, \tr)$.
An instance of a \pmol problem is specified by a tuple $\sca_\scp = (E,\sci,v,w,L)$.

For a fixed descriptor $\scp$, we denote the optimal value of~\eqref{eq:pmol-def}
by $\OPT_\scp$, or simply $\OPT$ when the context is clear:
\[
  \OPT_\scp
  \;:=\;
  \opt\bigl\{\,v(S) : S \in \scf(\scm),\ w(S) \triangleright L\,\bigr\},
\]

This formulation captures, in a uniform way, several extensively studied problems. For example:
\begin{itemize}
  \item $(\max,\IS,\le)$ gives the {\sc budgeted matroid independent set} problem, generalizing $0/1$-{\sc knapsack}.
  \item $(\min,\bases,\le)$ gives {\sc constrained minimum-basis} problems, including {\sc constrained minimum spanning tree} when $\scm$ is a graphic matroid.
  \item $(\min,\IS,\ge)$ corresponds to {\sc knapsack-cover with a matroid}. 
\end{itemize}
Among the eight possible choices of $(\opt,\scf,\triangleright)$, only $(\min,\IS,\le)$ is trivial (its optimum is
$S=\emptyset$); we focus on the remaining seven nontrivial variants.

The \pmol viewpoint also naturally extends beyond a single matroid. If one lets $\scf$ denote the intersection of two matroids, then \eqref{eq:pmol-def} becomes matroid-intersection optimization with a linear constraint. A notable example is the {\sc matroid intersection cover (MIC)} problem: given matroids $\scm_1=(E,\sci_1)$ and $\scm_2=(E,\sci_2)$ and weights $v,w:E\to\dr_{\ge 0}$, find a set $S\in\sci_1\cap\sci_2$ with $w(S)\ge L$ minimizing $v(S)$.
In our terminology, MIC aligns with a $(\min,\mathsf{intersection},\ge)$-MOL formulation, with feasibility governed by $\sci=\sci_1\cap\sci_2$.

\subsection{Our Contribution}
We present (in \Cref{sec:eptas}) a unified approximation framework for \pmol problems by generalizing a technique of Hassin and Levin~\cite{hassin2004efficient}. 
This allows to obtain an EPTAS for all non-trivial variants of $\scp$.
Specifically, for every fixed $\eps>0$ there exists an algorithm which, given an instance $\sca_\scp$, outputs a feasible solution $S$ satisfying
\[
  \opt = \min \;\Longrightarrow\; v(S) \le (1+\varepsilon)\,\OPT_\scp,
  \qquad
  \opt = \max \;\Longrightarrow\; v(S) \ge (1-\varepsilon)\,\OPT_\scp.
\]
The running time is $h(1/\eps)\cdot \poly(|\sca_\scp|)$ for an arbitrary function $h$.
This resolves the complexity status of all members of $\scp$-MOL,
as none of these problems admits a {\em fully} polynomial-time approximation scheme (FPTAS)~\cite{DKS24}. Table~\ref{tab:pmol-summary} presents the best known results for the \pmol class. Results obtained in this paper are marked in boldface.

We also consider the natural extension of \pmol variants to matroid intersection. For 
{\sc budgeted matroid intersection}, an EPTAS was presented in~\cite{DKS23b}.
In contrast to our positive result for the \pmol class, where matroid optimization is over a {\em single} matroid constraint,
we show that a unified EPTAS is unlikely to exist for {\em all} \pmol variants, already when the single matroid constraint is replaced by a {\em matroid intersection} constraint. Specifically, we prove (in \Cref{sec:hardness}) that MIC does not admit an EPTAS unless $\mathrm{W[1]}=\mathrm{FPT}$.

\subsection{Techniques}

\label{sec:techniques}

\paragraph{A Unified EPTAS for \pmol\!.}
For {\sc Budgeted Matroid Independent Set}, as well as for its extension to {\sc Budgeted Matroid Intersection}, recent works~\cite{DKS23,DKS23b} give EPTASs based on the \emph{representative-set} methodology. Roughly, it computes a small candidate pool that is guaranteed to contain the high-value elements of some near-optimal solution, and then recovers the near-optimum by exhaustive enumeration over this pool. However, this approach inherently $(i)$ loses the profit of $O(1)$ elements, and $(ii)$ may drop $O(1)$ elements to restore feasibility. 
Thus, a solution output by the representative set approach may be a non-basis for the given matroid. It may also be infeasible w.r.t. a covering constraint. Hence, the EPTASs of~\cite{DKS23,DKS23b} cannot be extended to EPTASs for the {\em entire}
\pmol class, which must handle both $\IS(\scm)$ and $\bases(\scm)$, and all four choices of $(\opt,\triangleright)$.

Instead, we generalize (in \Cref{sec:eptas}) a technique of Hassin and Levin~\cite{hassin2004efficient} to obtain a unified EPTAS for all members of \pmol under a general matroid constraint. The following is a high-level overview of the algorithmic framework.

Given a constant-factor guarantee $\hat V$ for $\OPT$, we bucket elements according to their values $v_e$. Elements with $v_e \le \varepsilon \hat V$ are classified as \emph{small}, while the remaining elements are partitioned into $O(1/\varepsilon^2)$ narrow buckets. We note that any optimal solution contains only $O(1/\varepsilon)$ \emph{large} elements. This allows us to enumerate \emph{configuration vectors} $\vec{k}$ that specify the number of selected elements from each bucket.

For a fixed configuration $\vec{k}$, we enforce these counts by intersecting the matroid with a partition matroid, yielding a matroid-intersection family 
$\mathcal{F}_{\vec{k}}$. We then use a Lagrangian relaxation of the linear constraint $w(S)\triangleright L$ and compute $\lambda^\star_{\vec{k}}$, an optimal value of the multiplier $\lambda$.

The main technical challenge is to convert the Lagrangian optimum into a feasible solution. At $\lambda^\star_{\vec{k}}$, the family $\mathcal{O}_{\vec{k}}$ of Lagrangian-optimal sets is \emph{well connected}: any two of its members can be transformed into one another via a sequence of exchanges, each replacing at most one element per bucket. We compute sets $S^{-}, S^{+} \in \mathcal{O}_{\vec{k}}$ that respectively minimize and maximize the signed violation of $w(S)\triangleright L$, and hence lie on opposite sides of feasibility. Traversing the one-per-bucket exchange path from $S^{-}$ to $S^{+}$, we select the first feasible set $S^\star$.

Because elements within each bucket have nearly identical $v$-values, each exchange affects the objective value by at most the total \emph{bucket spread}—that is, the difference between the maximum and minimum $v$-value within a bucket. Consequently, the first feasible set $S^\star$ yields an objective value within an additive $O(\varepsilon \OPT)$ of the Lagrangian bound. Taking the best such $S^\star$ over all configurations yields an EPTAS.

We note that some variants in our descriptor-based formulation are related by standard reductions. Nevertheless, we keep the unified formulation for completeness and to make the scope of the algorithm explicit, since such reductions may not always preserve the approximation guarantees in a straightforward way. In particular, budget-type and covering-type constraints require careful treatment: the existence of an EPTAS for the budget setting under a matroid intersection constraint~\cite{DKS23b} does not, by itself, imply an analogous result for the covering setting. Our hardness result in \cref{thm:MIC} highlights this distinction. Therefore, we present the argument directly for the descriptor class $\scp=(\opt,\scf,\tr)$.

\paragraph{Hardness of MIC.}
We establish a parameterized lower bound for {\sc matroid intersection cover} (MIC). Via a reduction from $k$-Clique, we show that MIC admits no EPTAS unless $W[1]=\mathrm{FPT}$ (Theorem~\ref{thm:MIC}). The novelty of our result lies in the fact that this reduction rules out an \emph{EPTAS}, whereas a similar construction of~\cite{DKS24} is used only to exclude an \emph{FPTAS}. Our result complements recent applications of related reductions to $3$-{\sc matroid intersection}~\cite{doron2026you}.
Together, our findings demonstrate that for matroid intersection, different linear constraints (cover versus budget) lead to markedly different algorithmic behavior, in contrast to the case of a single matroid.


\begin{table}[ht]
\centering
\renewcommand{\arraystretch}{1.2}
\begin{NiceTabular}{@{} lccc c c @{}}
\toprule
\textbf{Problem Type} & \textbf{Opt} & $\mathcal{F}$ & $\triangleright$ & \textbf{Positive Result} & \textbf{Negative Result} \\
\midrule
{\sc Budgeted Matroid Independent Set} & $\max$ & $\text{IS}$ & $\le$ & EPTAS~\cite{DKS23} & No FPTAS \\
{\sc Knapsack-Cover with a Matroid} & $\min$ & $\text{IS}$ & $\ge$ & \textbf{EPTAS} & No FPTAS \\
\addlinespace
{\sc Budgeted Matroid Basis} & $\max$ & $\text{Bases}$ & $\le$ & \textbf{EPTAS} & No FPTAS \\
{\sc Constrained Minimum Basis} & $\min$ & $\text{Bases}$ & $\le$ & \textbf{EPTAS}  \tablefootnote{Previously, an EPTAS was known for the special case of {\sc Constrained Minimum Basis} in a {\em graphic} matroid~\cite{hassin2004efficient}.}
& No FPTAS \\
\addlinespace
{\sc Lower-Bounded Max Basis} & $\max$ & $\text{Bases}$ & $\ge$ & \textbf{EPTAS} & No FPTAS\\
{\sc Lower-Bounded Min Basis} & $\min$ & $\text{Bases}$ & $\ge$ & \textbf{EPTAS} & No FPTAS\\
\addlinespace
{\sc Lower-Bounded Indep. Set} & $\max$ & $\text{IS}$ & $\ge$ & \textbf{EPTAS} & No FPTAS\\
\bottomrule
\end{NiceTabular}
\caption{Our positive results for $\scp$-MOL problems and the matching negative results of~\cite{DKS24}.}
\label{tab:pmol-summary}
\end{table}



\subsection{Related Work}

\label{sec:related}

\paragraph{Knapsack and approximation schemes.}
The $0/1$-\textsc{Knapsack} problem is a canonical benchmark for approximation schemes. Early work already established a fully polynomial-time approximation scheme~\cite{La79}, and dynamic-programming–based FPTASs and their implications have since been extensively studied~\cite{CLRS22}. More recent progress has focused on fine-grained analyses and improved time–approximation tradeoffs~\cite{bringmann2021fine,deng2023approximating}.

\paragraph{Matroid and matching constraints with a single budget.}
Adding a matroid or matching feasibility constraint to $0/1$-\textsc{Knapsack} gives rise to the well-studied problems of {\sc Budgeted Matching} and {\sc Budgeted Matroid Intersection}. A PTAS for both variants was obtained by Berger et al.~\cite{BBGS11}. Subsequent work strengthened these results to EPTAS guarantees: for {\sc Budgeted Matroid Independent Set}~\cite{DKS23}, and for {\sc Budgeted Matching} and {\sc Budgeted Matroid Intersection} via representative sets~\cite{DKS23b}.

While an FPTAS is ruled out for general instances of {\sc Budgeted Matroid Independent Set}~\cite{DKS24}, FPTASs are known for several special cases, including {\sc Cardinality-Constrained Knapsack}~\cite{caprara2000approximation}, {\sc Multiple-Choice Knapsack}~\cite{kellerer2004multiple}, and laminar matroids~\cite{DKS23c,yang2025improved}. Additional variants of the problem have also been studied, for example in~\cite{doron2023budgeted,doron2025algorithm}.

$\scp$-MOL problems have also been studied in the covering setting (as opposed to a budget constraint), in conjunction with a matroid feasibility constraint. For {\sc Lower-Bounded Minimum Independent Set}, the best known result prior to this work was a PTAS, due to~\cite{chakaravarthy2013knapsack}. For {\sc constrained minimum spanning tree}, a PTAS was given in~\cite{ravi1996constrained}, followed by an EPTAS in~\cite{hassin2004efficient}.
Furthermore, a multi-criteria FPTAS is known for this problem~\cite{hong2004fully}.


\paragraph{Multiple budgets.}
When multiple budget constraints are present, PTASs are known for variants with either matroid intersection or matching constraints, using techniques such as dependent randomized rounding~\cite{CVZ11} and structural properties of matroid polytopes~\cite{GZ10}. On the other hand, even seemingly simple extensions face strong complexity barriers: for example, the {\sc Multidimensional Knapsack} problem with just two budget constraints already rules out an EPTAS under standard complexity assumptions~\cite{kulik2010there}.


\section{An EPTAS for the \pmol Class}
\label{sec:eptas}
Let $\sca_\scp = (E,\sci,v,w,L)$ be an instance of \pmol with descriptor $\scp = (\opt,\scf,\tr)$ and fix $\eps > 0$. Recall that $\scf$ denotes the underlying combinatorial family: either all independent sets or all bases of $\scm$. Our EPTAS combines a value-bucketing scheme with a Lagrangian relaxation applied to the intersection of $\scm$ with a partition matroid (defined below). We begin by describing the bucketing and configuration step. Let $n := |E|$ denote the ground-set size.

\subsection{Value Bucketing and Configuration Vectors}

We first obtain a constant-factor estimate of $\OPT$.

\begin{lemma}\label{lem:Vhat}
Given an instance $\sca_\scp$, one can find in $|E|^{O(1)}$ a feasible solution $\hat{S}$ with value $\hat{V} := v(\hat{S})$ such that
\[
  \frac{\OPT}{2} \le \hat{V} \leq \OPT \quad \text{if } \opt = \max,
  \qquad\text{and}\qquad
  \OPT \le \hat{V} \le 2\OPT \quad \text{if } \opt = \min.
\]
\end{lemma}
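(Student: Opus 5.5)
The plan is to reduce every variant to a base problem, guess the one or two most valuable elements of an optimal solution $S^\ast$, and then use the classical Lagrangian ``adjacent bases'' argument (as in Ravi--Goemans and Hassin--Levin), which loses at most one element's value.

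\emph{Step 1 (reduce to bases).} If $\scf=\IS$, I would replace $\scm$ by the truncation to rank $r=\rank(\scm)$ of the direct sum of $\scm$ with a free matroid on $r$ dummy elements, giving dummies $v=w=0$. The bases of this matroid are exactly the sets $I\cup D$ with $I\in\sci$ and $|D|=r-|I|$. Both the $v$- and $w$-values and feasibility are preserved, so from now on $\scf=\bases$. Independence oracles for the new matroid are immediate.

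\emph{Step 2 (guessing and filtering).}
\begin{itemize}
\item For $\opt=\min$, I guess the element $e_1\in S^\ast$ of largest $v$-value ($O(n)$ guesses) and delete all elements with $v_e>v_{e_1}$. Then $S^\ast$ survives, and every element has value at most $v_{e_1}\le \OPT$.
\item For $\opt=\max$, I guess the two most valuable elements $e_1,e_2\in S^\ast$ ($O(n^2)$ guesses). I contract them, delete elements with $v_e>v_{e_2}$, and replace $L$ by $L-w(e_1)-w(e_2)$. Every remaining element then has value at most $v_{e_2}\le \OPT/2$.
\end{itemize}
Over all guesses I keep the best feasible solution found; if $S^\ast$ has fewer than two elements, exhaustive search handles it trivially.

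\emph{Step 3 (Lagrangian step on the residual base problem).} Consider the minimization case with constraint $\le$; the other three sign combinations are symmetric. Define
\[
g(\lambda)=\min_{B}\, \bigl(v(B)+\lambda(w(B)-L)\bigr), \qquad \lambda\ge 0 .
\]
This is a minimum-weight base computation, solvable greedily for each fixed $\lambda$. The function $g$ is concave and piecewise linear, and $g(\lambda)\le\OPT$. A maximizer $\lambda^\star$ can be found in strongly polynomial time, since breakpoints occur only where the greedy order changes, i.e.\ at the $O(n^2)$ values where two elements swap order.

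If the base minimizing $w$ is infeasible, the guess is discarded. If the $\lambda=0$ optimum is already feasible, it is optimal. Otherwise, at $\lambda^\star$ there are Lagrangian-optimal bases $B^-$ with $w(B^-)\le L$ and $B^+$ with $w(B^+)\ge L$. Since the optimal bases form a face of the base polytope, whose edges are single exchanges, I can choose them adjacent: $B^-=B^+-f+e$. Concretely, I would sort ties by $w$ in both directions and walk along a single-exchange path between the two greedy outputs, stopping at the sign change of $w(B)-L$.

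\emph{Step 4 (bounding the loss).} A convex combination of $B^-$ and $B^+$ has $w$ exactly $L$, so $\min\{v(B^-),v(B^+)\}\le g(\lambda^\star)\le\OPT$. Adjacency gives $|v(B^-)-v(B^+)|\le \max_x v_x$. Hence the feasible base (here $B^-$) satisfies $v(B^-)\le \OPT+v_{e_1}\le 2\OPT$.

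For maximization the analogous bound gives a feasible residual base of value at least $\OPT'-v_{e_2}$, where $\OPT'=\OPT-v_{e_1}-v_{e_2}$. Adding back $e_1,e_2$ yields value at least $\OPT-v_{e_2}\ge\OPT/2$. The $\ge$-constraint cases are identical with the roles of $B^-$ and $B^+$ swapped.

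\emph{Main obstacle.} The delicate step is guaranteeing that the two Lagrangian-optimal bases straddling the budget can be taken adjacent, and computed efficiently. My first approach is to use the fact that all minimum-weight bases at $\lambda^\star$ are connected by single exchanges that stay optimal: the exchange graph restricted to tied elements is connected. I would then walk greedily from the $w$-minimal optimal base to the $w$-maximal one, stopping at the first crossing of $L$.
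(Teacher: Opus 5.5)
Your proposal is correct, but it takes a genuinely different route from the paper. The paper's proof is a pure guessing argument. It enumerates powers of $2$ up to $v(E)$, observes that one of them is within a factor $2$ of $\OPT$, and from then on assumes such a $\hat V$ is given, so the EPTAS is effectively run once per guess. It never constructs a feasible $\hat S$, and it needs a bit-length side argument to bound the number of guesses.

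You instead compute an actual feasible $\hat S$ in strongly polynomial time:
\begin{itemize}
\item You reduce independent sets to bases via zero-weight dummies.
\item You guess the top one (for $\min$) or two (for $\max$) elements of $S^\ast$, so that every surviving element has value at most $v_{e_1}\le\OPT$ (resp.\ $v_{e_2}\le\OPT/2$).
\item You run the Ravi--Goemans-style Lagrangian step over bases. Two Lagrangian-optimal bases straddling the budget can be taken to differ by one exchange, which costs at most one element's value.
\end{itemize}
This is essentially the paper's own Lemma~\ref{lem:structural} specialized to a single bucket. There the spread is bounded by the largest surviving value, and single-matroid base exchange replaces the matroid-intersection exchange path.

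What your route buys:
\begin{itemize}
\item It proves the lemma as literally stated: a feasible $\hat S$ with $v(\hat S)=\hat V$, which Algorithm~\ref{alg:eptas-pmol} indeed uses to initialize $S^{\mathrm{best}}$.
\item It sidesteps the edge cases that power-of-two enumeration must treat separately, namely $\OPT=0$ and non-integral values.
\end{itemize}
What it costs is considerably more machinery than the paper's short guess.

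When writing it up, make two small points explicit. First, for a wrong guess, deletion or contraction may lower the rank, so residual bases need not extend to bases of $\scm$. Check rank preservation, or verify feasibility in the original instance before keeping a candidate. Second, justify the walk between the $w$-extremal optimal bases. By symmetric exchange, for optimal bases $B,B'$ and $f\in B'\setminus B$ there is $e\in B\setminus B'$ with $B-e+f$ and $B'-f+e$ both bases. Their weights sum to the sum of the weights of $B$ and $B'$, which forces both to be optimal. This gives a constructive single-exchange path inside the optimal family.
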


\begin{proof}
    We obtain $\hat V$ by iterating over all powers of $2$ in the domain $[0,v(E)]$. In one of these iterations we necessarily hit a value within a factor $2$ of $\OPT$, and we output this guess as $\hat V$. For simplicity of presentation, in the remainder we assume that we are given such a value $\hat V$ satisfying the stated $2$-approximation guarantees.

    The number of guesses is \(O(\log v(E))\). Since \(\OPT \le v(E)\le n\,v_{\max}\),
    \[
    \log v(E)\le \log(nv_{\max})=O(\log n+\log v_{\max}).
    \]
    Let \(b\) be the bit-length of \(v_{\max}\). If \(b=\poly(n)\) we are done; otherwise \(b\) is exponential in \(n\) and we can solve the instance exactly by enumerating all \(2^n\) subsets of $E$, which is still polynomial in $|E|$.
\end{proof}

We assume w.l.o.g.\ that $v_e \le 2\hat V$ for all $e\in E$ (any $e$ with
$v_e>2\hat V$ cannot belong to an optimal solution, since all values are nonnegative and
$\OPT \le 2\hat V$ by Lemma~\ref{lem:Vhat}).
We partition the ground set $E$ into bins according to the values $v_e$. We first define the bin of \emph{small} elements:
\[
  B_0 := \{ e \in E : v_e \le \eps \hat{V} \}.
\]
All other elements are considered \emph{large} and are further categorized into $T = \Bigl\lceil \frac{2-\eps}{\eps^2} \Bigr\rceil$ bins:
\[
  B_i := \bigl\{ e \in E : (\eps+(i-1)\eps^2)\hat{V} < v_e \le (\eps + i \eps^2) \hat{V} \bigr\},
  \qquad i=1,\ldots,T.
\]

\begin{lemma} \label{lem:few-large}
In any optimal solution $S^\opt$, at most $2/\eps$ elements belong to $\bigcup_{i=1}^T B_i$.
\end{lemma}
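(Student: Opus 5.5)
The plan is a direct counting argument. Every element of $\bigcup_{i=1}^T B_i$ lies outside $B_0$, so each large element $e$ satisfies $v_e > \eps \hat V$. Values are nonnegative. Hence for any optimal solution $S^\opt$, letting $\ell$ denote the number of its large elements, we get
\[
  \ell \cdot \eps \hat V \;<\; \sum_{e \in S^\opt \setminus B_0} v_e \;\le\; v(S^\opt) = \OPT
\]
whenever $\ell \ge 1$. This gives $\ell < \OPT/(\eps \hat V)$, and it remains to compare $\OPT$ with $\hat V$ using Lemma~\ref{lem:Vhat}.

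The comparison splits by objective direction. If $\opt=\max$, Lemma~\ref{lem:Vhat} gives $\OPT \le 2\hat V$, so $\ell < 2/\eps$ and we are done. If $\opt=\min$, it gives $\OPT \le \hat V$, so in fact $\ell < 1/\eps \le 2/\eps$. In both cases the bound $\OPT \le 2\hat V$, which the paper already invoked when discarding elements with $v_e > 2\hat V$, is all that is needed. The case $\ell = 0$ is trivial.

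I do not see a real obstacle here. The one point requiring care is that the guarantee of Lemma~\ref{lem:Vhat} is used in the correct direction. That is, we need an upper bound on $\OPT$ in terms of $\hat V$, not a lower bound, and this holds in both cases.

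I would also remark that any elements removed in the preprocessing step (those with $v_e > 2\hat V$) do not affect the argument. The statement concerns elements of the bins $B_1,\dots,B_T$, all of which have value larger than $\eps\hat V$ by definition.
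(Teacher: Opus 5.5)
Your proof is correct and is essentially the paper's argument. The paper argues by contradiction: more than $2/\eps$ large elements would force $v(S^\opt) > 2\hat V$, contradicting $\OPT \le 2\hat V$ from Lemma~\ref{lem:Vhat} in both the $\max$ and $\min$ cases. You run the same inequality directly, with the only nit being that dividing by $\eps\hat V$ presumes $\hat V>0$ (when $\hat V=0$, your chain $\ell\,\eps\hat V<\OPT\le 2\hat V$ already rules out $\ell\ge 1$).
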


\begin{proof}
Suppose, for the sake of contradiction, that $S^\opt$ contains more than $2/\eps$ elements from $B_1\cup\cdots\cup B_T$. By the
definition of these bins, every such element has value strictly larger than $\eps \hat V$, and hence
\[
 v\bigl(S^\opt \bigr) \geq v\bigl(S^\opt \cap (B_1\cup\cdots\cup B_T)\bigr)
  \;>\; \frac{2}{\eps}\cdot \eps \hat V
  \;=\; 2\hat V.
\]
Hence $v(S^\opt)>2\hat V$. If $\opt=\max$, then by Lemma~\ref{lem:Vhat} we have $\OPT\le 2\hat V$, contradicting $v(S^\opt)=\OPT>2\hat V$. If $\opt=\min$, then Lemma~\ref{lem:Vhat} gives $\OPT\le \hat V$ and hence $\OPT\le 2\hat V$, again contradicting $v(S^\opt)=\OPT>2\hat V$.
 Thus $S^\opt$ contains at most $2/\varepsilon$ elements from $\bigcup_{i=1}^T B_i$.
\end{proof}

We therefore restrict our attention to solutions with at most $2/\eps$ large elements. We encode such solutions via a \emph{configuration vector} $\vec{k} = (k_0,k_1,\dots,k_T)$, where $k_i$ denotes the number of elements that should be chosen from bin $B_i$. We only consider vectors satisfying
\begin{equation}
\label{eq:config-constraints}
  \sum_{i=1}^T k_i \le \left\lceil \frac{2}{\eps} \right\rceil,
  \qquad
  \sum_{i=0}^T k_i \le |E|.
\end{equation}
For such a vector $\vec{k}$, a set $S \subseteq E$
\emph{respects the configuration} $\vec{k}$ if $|S \cap B_i| \;=\; k_i \text{ for all } i = 0,\dots,T.$

\begin{lemma}\label{lem:num-configs}
The number of distinct configuration vectors satisfying \eqref{eq:config-constraints} is at most $O\left( |E|\cdot \left(\frac{1}{\eps^2}\right)^{2/\eps}\right)$.
\end{lemma}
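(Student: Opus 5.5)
The count factors into a choice of $k_0$ and a choice of the large part $(k_1,\dots,k_T)$, and I will bound the two separately and multiply. By the second constraint in \eqref{eq:config-constraints}, $k_0\in\{0,1,\dots,|E|\}$, so there are at most $|E|+1=O(|E|)$ choices for $k_0$. It then remains to show that the number of nonnegative integer vectors $(k_1,\dots,k_T)$ with $\sum_{i=1}^T k_i\le K:=\lceil 2/\eps\rceil$ is at most $(1/\eps^2)^{O(1/\eps)}$, and more precisely $O\bigl((1/\eps^2)^{2/\eps}\bigr)$ up to the constants hidden in the $O$.

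For the large part I will use a direct encoding rather than the exact binomial count. Each such vector corresponds to a multiset of at most $K$ bin indices from $\{1,\dots,T\}$. I map every length-$K$ sequence over the alphabet $\{0,1,\dots,T\}$ to a vector by letting $k_i$ be the number of occurrences of $i$, where the symbol $0$ means "empty slot". This map is onto the set of admissible vectors, so the number of vectors is at most $(T+1)^K$.

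Next I plug in $T=\lceil (2-\eps)/\eps^2\rceil\le 2/\eps^2$. For $\eps\le 1$ this gives $T+1\le 3/\eps^2$. Therefore the count is at most $(3/\eps^2)^{\lceil 2/\eps\rceil}\le (3/\eps^2)^{2/\eps+1}$.

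The main (mild) obstacle is reconciling this bound with the precise form $(1/\eps^2)^{2/\eps}$. The extra factors $3^{2/\eps+1}$ and $1/\eps^2$ are themselves $(1/\eps)^{O(1/\eps)}$, so the bound is $(1/\eps^2)^{O(1/\eps)}$, which is what the running-time claim needs. If the exact exponent $2/\eps$ is wanted, I would instead use the stars-and-bars count $\binom{K+T}{K}\le \bigl(e(K+T)/K\bigr)^K$. Since $(K+T)/K=O(1/\eps)$, this gives $O(1/\eps)^{K}\le (1/\eps^2)^{2/\eps}$ for small $\eps$, absorbing the ceiling and the constants. Multiplying by the $O(|E|)$ choices of $k_0$ completes the bound.
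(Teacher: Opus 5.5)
Your proof is correct and follows essentially the same route as the paper. Both arguments take at most $|E|+1$ choices for $k_0$ and multiply by the number of large-bin tuples with $\sum_{i\ge1}k_i\le\lceil 2/\eps\rceil$. The paper counts these tuples exactly as $\binom{T+m}{m}$ by stars and bars and then bounds that crudely by $(T+m)^m$; your surjection bound $(T+1)^K$ is the same kind of estimate.

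Your refinement $\binom{K+T}{K}\le (e(K+T)/K)^K$ is more careful than the paper's final step $(T+m)^m=O(T^m)=O((1/\eps^2)^{2/\eps})$. Since $T\approx 2/\eps^2$, that step silently puts a factor of roughly $2^{2/\eps}$, plus a possible extra factor of $T$ from the ceiling, inside the $O(\cdot)$. So the paper's argument strictly justifies only $(1/\eps^2)^{O(1/\eps)}$, which is all the running-time theorem needs, whereas your version gives the stated exponent.
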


\begin{proof}
Let $m := \lceil 2/\eps \rceil$. By \eqref{eq:config-constraints}, the number of valid tuples $(k_1, \dots, k_T)$ corresponds to the number of non-negative integer solutions to the inequality $\sum_{i=1}^T k_i \le m$.
By introducing a slack variable $k^\star \ge 0$ such that $k^\star + \sum_{i=1}^T k_i = m$, this count is equivalent to the number of weak compositions of $m$ into $T+1$ parts. Applying the stars and bars formula, we obtain exactly
\[
\binom{(T+1) + m - 1}{m} = \binom{T + m}{m}
\]
configurations. For each fixed tuple, $k_0$ is bounded by $|E|$. For a fixed $\eps$ (and thus fixed $m$), the number of configurations is
\[
|E|\cdot\binom{T + m}{m}
\;\le\; |E|\cdot (T+m)^m
\;=\; O\!\left(|E|\cdot T^m\right)
\;=\; O\!\left(|E|\cdot\left(\frac{1}{\eps^2}\right)^{2/\eps}\right).
\]
\end{proof}

Let $\vec{k}^\opt$ denote the configuration vector of an optimal solution $S^\opt$. By Lemma~\ref{lem:few-large}, $\vec{k}^\opt$ satisfies \eqref{eq:config-constraints}, hence it is among the vectors considered by our algorithm.

\subsection{Partition Matroid and Restricted Family}

Fix a configuration vector $\vec{k}=(k_0,\ldots,k_T)$ satisfying~\eqref{eq:config-constraints}, and let $K:=\sum_{i=0}^T k_i$. We can enforce the cardinality of $\scm$ by considering its truncation to rank $K$, denoted $\scm_{\le K}$, whose independent sets are
$\sci_{\le K} = \IS(\scm_{\le K}) = \{S\in \sci: |S|\le K\}$.

Next, we encode the bin-wise upper bounds prescribed by $\vec{k}$ via a partition matroid. Let $B_0,\ldots,B_T$ be the bins (a partition of $E$). Define the partition matroid $\scm^P_{\vec{k}}$ on $E$ by
\[
S\in \sci^P_{\vec{k}} = \IS(\scm^P_{\vec{k}})\quad \Longleftrightarrow \quad |S\cap B_i|\le k_i \ \ \text{for all } i=0,\ldots,T.
\]
We restrict the \pmol problem to sets that satisfy both matroid constraints and have exactly $K$ elements:
\[
\scf_{\vec k}
\;:=\;
\{\, S\subseteq E:\ S\in \sci_{\le K} \cap \sci^P_{\vec k} \; \text{and}\; |S|=K\,\}.
\]
If $K>\rank(\scm)$, then $\scf_{\vec k}=\emptyset$ (since no independent set can have size $K$);
in the algorithm we simply skip such configurations.

Equivalently, $\scf_{\vec k}$ is the family of common bases of $\scm_{\le K}$ and $\scm^P_{\vec k}$ (since in each matroid, a set is a base iff it is independent and has size equal to the matroid rank, here $K$). If the original feasible family is $\scf=\IS(\scm)$, then the above definition applies for any $K\le \rank(\scm)$. If instead $\scf=\bases(\scm)$, then we must have $K=\rank(\scm)$; otherwise no set of size $K$ can be a base of $\scm$. Accordingly, in the bases variant we only consider configuration vectors with $K=\rank(\scm)$ (and for other $K$ we take $\scf_{\vec k}=\emptyset$).

Finally, note that membership in $\scf_{\vec k}$ forces the bin constraints of $\scm^P_{\vec{k}}$ to be tight. Indeed, for any $S\in\scf_{\vec k}$,
we have
\[
|S\cap B_i|\le k_i \ \ \forall i \in \{0,\ldots,T\}
\quad\text{and}\quad
\sum_{i=0}^T |S\cap B_i| = |S| = K = \sum_{i=0}^T k_i,
\]
implying that $|S\cap B_i|=k_i$ for all $i$. In particular, every $S\in\scf_{\vec k}$ is a base of the partition matroid
$\scm^P_{\vec k}$ and also a base of the truncated matroid $\scm_{\le K}$ since $|S|=K$.



For $\opt = \max$, let $\OPT_{\vec{k}}$ denote the maximum of $v(S)$ over all
$S \in \scf_{\vec{k}}$ satisfying $w(S) \tr L$; for $\opt = \min$ we define
$\OPT_{\vec{k}}$ analogously as a minimum.
If $\vec{k} = \vec{k}^\opt$ is the configuration of an optimal solution,
then clearly $\OPT_{\vec{k}^\opt} = \OPT$.

\subsection{Lagrangian Relaxation}
Returning to our problem, we must satisfy both the matroid-intersection constraint and the linear constraint: $w(S)\tr L$. Since the linear constraint is typically the harder one to handle directly, we apply \emph{Lagrangian relaxation} to incorporate it into the objective via a penalty term.

To express the different orientations in a unified form, we define two sign parameters:
\[
s_{\opt} =
\begin{cases}
+1 & \text{if } \opt = \max, \\
-1 & \text{if } \opt = \min,
\end{cases}
\qquad
s_{\tr} =
\begin{cases}
+1 & \text{if } \tr = \geq, \\
-1 & \text{if } \tr = \leq.
\end{cases}
\]
For $\lambda \ge 0$ and $S \subseteq E$, we define the Lagrangian objective
\[
  \phi_\lambda(S)
  \;:=\;
  v(S) + s_{\opt} s_{\tr} \lambda \bigl( w(S) - L \bigr).
\]
The corresponding Lagrangian relaxation value for some fixed configuration vector $\vec{k}$ is
\begin{equation}
\label{prob:lagrangian-k}
  V_{\vec{k}}(\lambda)
  \;:=\;
  \opt_{S \in \scf_{\vec{k}}} \, \phi_\lambda(S).
\end{equation}
Finally, we define the optimal value of the Lagrangian relaxation by
\[
  \LR_{\vec{k}} :=
  \begin{cases}
    \displaystyle \min_{\lambda \ge 0} V_{\vec{k}}(\lambda) & \text{if } \opt = \max,\\[2mm]
    \displaystyle \max_{\lambda \ge 0} V_{\vec{k}}(\lambda) & \text{if } \opt = \min.
  \end{cases}
\]

\begin{lemma}\label{lem:Vk-convex-concave}
For every configuration vector $\vec{k}$, the function
$V_{\vec{k}} : \mathbb{R}_{\ge0} \to \mathbb{R}$ is piecewise linear.
Moreover, if $\opt = \max$, then $V_{\vec{k}}$ is convex, and if
$\opt = \min$, then $V_{\vec{k}}$ is concave.
\end{lemma}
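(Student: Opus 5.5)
The plan is to write $V_{\vec{k}}$ as a pointwise optimum of finitely many affine functions of $\lambda$ and then read off both claims. Fix $\vec{k}$. If $\scf_{\vec{k}}=\emptyset$, we adopt the convention that $V_{\vec{k}}$ is $\mp\infty$ (equivalently, the configuration is skipped), so assume $\scf_{\vec{k}}\neq\emptyset$. For each $S\in\scf_{\vec{k}}$, define
\[
\ell_S(\lambda):=\phi_\lambda(S)=v(S)+s_{\opt}s_{\tr}\bigl(w(S)-L\bigr)\lambda .
\]
This is an affine function of $\lambda$ with intercept $v(S)$ and slope $s_{\opt}s_{\tr}(w(S)-L)$. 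By \eqref{prob:lagrangian-k}, $V_{\vec{k}}(\lambda)=\opt_{S\in\scf_{\vec{k}}}\ell_S(\lambda)$. The family $\scf_{\vec{k}}\subseteq 2^E$ is finite and nonempty, so for each $\lambda$ the optimum is attained and is finite.

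For convexity and concavity: if $\opt=\max$, then $V_{\vec{k}}$ is a pointwise maximum of affine functions and hence convex. Concretely, for $\lambda_1,\lambda_2\ge0$ and $\theta\in[0,1]$, pick $S$ attaining the maximum at $\theta\lambda_1+(1-\theta)\lambda_2$. Then
\[
V_{\vec{k}}(\theta\lambda_1+(1-\theta)\lambda_2)=\theta\ell_S(\lambda_1)+(1-\theta)\ell_S(\lambda_2)\le\theta V_{\vec{k}}(\lambda_1)+(1-\theta)V_{\vec{k}}(\lambda_2).
\]
If $\opt=\min$, the same argument with the inequalities reversed shows that a pointwise minimum of affine functions is concave.

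For piecewise linearity, the upper (resp.\ lower) envelope of finitely many lines is piecewise linear with finitely many breakpoints. Any two distinct lines $\ell_S\neq\ell_{S'}$ intersect in at most one point. Let $\Lambda\subset\mathbb{R}_{\ge0}$ be the finite set of all such intersection points together with $0$. On each open interval between consecutive points of $\Lambda$, and on the final unbounded interval, no two distinct lines cross. Hence the order of the values $\ell_S(\lambda)$ is constant there, so a single $S$ attains the optimum throughout that interval. Therefore $V_{\vec{k}}$ coincides with one affine function on each piece. Continuity at the breakpoints follows because a finite max or min of continuous functions is continuous.

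There is no real obstacle in this argument. The only care needed is to keep the finite nonemptiness of $\scf_{\vec{k}}$ explicit (or dispose of the empty case), and to note that the sign factor $s_{\opt}s_{\tr}$ only affects slopes, so it plays no role in the argument.
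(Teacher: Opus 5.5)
Your proof is correct and follows the paper's argument: both write $V_{\vec{k}}$ as the pointwise maximum (resp.\ minimum) of the finitely many affine functions $\lambda\mapsto\phi_\lambda(S)$, $S\in\mathcal{F}_{\vec{k}}$, and read off piecewise linearity and convexity (resp.\ concavity) from that. You also spell out the breakpoint argument and the empty-$\mathcal{F}_{\vec{k}}$ case, both of which the paper leaves implicit.
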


\begin{proof}
Fix $\vec{k}$ and consider any $S\in\scf_{\vec{k}}$. The function $\phi_\lambda(S)$ is affine in $\lambda$. Therefore, $V_{\vec{k}}(\lambda)$ is the pointwise optimum of finitely many affine functions (one for each $S\in\scf_{\vec{k}}$), and hence $V_{\vec{k}}$ is piecewise linear.

Moreover, if $\opt=\max$ then $V_{\vec{k}}$ is the pointwise maximum of affine functions and hence convex; if $\opt=\min$ then $V_{\vec{k}}$ is the pointwise minimum of affine functions and hence concave.
\end{proof}

\begin{lemma} 
\label{lem:lagrangian-bound-k}
For every configuration vector $\vec{k}$, if $\opt = \max$, then $\LR_{\vec{k}} \ge \OPT_{\vec{k}}$, and if $\opt = \min$, then $\LR_{\vec{k}} \le \OPT_{\vec{k}}$.
\end{lemma}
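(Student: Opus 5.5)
This is standard weak duality, and I would argue it uniformly in the sign parameters. The idea is to fix an arbitrary feasible set $S$ for the restricted problem, i.e.\ $S\in\scf_{\vec k}$ with $w(S)\tr L$. For this $S$ and every $\lambda\ge 0$ I will show that the penalty term $s_{\opt}s_{\tr}\lambda\bigl(w(S)-L\bigr)$ has the ``favorable'' sign: it is $\ge 0$ when $\opt=\max$ and $\le 0$ when $\opt=\min$. If $\scf_{\vec k}$ has no feasible set, $\OPT_{\vec k}$ equals $-\infty$ (for $\max$) or $+\infty$ (for $\min$), and the claim is vacuous. I would state this convention first.

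\textbf{Sign check.} Feasibility $w(S)\tr L$ means exactly that $s_{\tr}\bigl(w(S)-L\bigr)\ge 0$. If $\tr=\ge$, then $s_{\tr}=+1$ and $w(S)-L\ge 0$. If $\tr=\le$, then $s_{\tr}=-1$ and $L-w(S)\ge 0$. Multiplying by $\lambda\ge0$ gives $s_{\tr}\lambda(w(S)-L)\ge 0$. Multiplying by $s_{\opt}$ then gives the claimed sign of the penalty term in each case.

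\textbf{Case $\opt=\max$.} For every $\lambda\ge0$ we have
\[
V_{\vec k}(\lambda)=\max_{S'\in\scf_{\vec k}}\phi_\lambda(S')\ge \phi_\lambda(S)=v(S)+s_{\tr}\lambda\bigl(w(S)-L\bigr)\ge v(S).
\]
Taking the infimum over $\lambda\ge0$ gives $\LR_{\vec k}\ge v(S)$. Since $S$ was an arbitrary feasible set, maximizing over it yields $\LR_{\vec k}\ge\OPT_{\vec k}$.

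\textbf{Case $\opt=\min$.} Symmetrically,
\[
V_{\vec k}(\lambda)\le\phi_\lambda(S)=v(S)-s_{\tr}\lambda\bigl(w(S)-L\bigr)\le v(S)
\]
for every $\lambda\ge 0$. Taking the supremum over $\lambda$ and then minimizing over feasible $S$ gives $\LR_{\vec k}\le\OPT_{\vec k}$.

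\textbf{Attainment of the extremum over $\lambda$.} This is the only mildly delicate point. $\LR_{\vec k}$ is written as a $\min$ or $\max$ over the unbounded set $\lambda\ge0$. By Lemma~\ref{lem:Vk-convex-concave}, $V_{\vec k}$ is piecewise linear with finitely many pieces, so its infimum (respectively supremum) is either attained at a breakpoint or is unbounded. The argument above only uses the fact that every value $V_{\vec k}(\lambda)$ is on the correct side of $v(S)$, so the inequality passes to the inf/sup regardless of attainment.

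In the $\max$ case, if a feasible $S$ exists then $V_{\vec k}(\lambda)$ is bounded below by $v(S)$, so the infimum is finite. The $\min$ case is symmetric, with $V_{\vec k}(\lambda)$ bounded above by $v(S)$.
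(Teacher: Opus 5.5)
Your proof is correct and is the same weak-duality argument the paper gives: feasibility of $S$ means $s_{\tr}(w(S)-L)\ge 0$, hence $s_{\opt}\phi_\lambda(S)\ge s_{\opt}v(S)$ for every $\lambda\ge 0$, and you then optimize over feasible $S$ and over $\lambda$. Your remarks on the no-feasible-set convention and on attainment of the extremum over $\lambda$ are harmless refinements that the paper leaves implicit.
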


\begin{proof}
Let $S$ be any feasible solution with configuration $\vec{k}$, i.e.,
$S \in \scf_{\vec{k}}$ and $w(S) \tr L$.
Then $s_{\tr}\bigl(w(S)-L\bigr) \ge 0$.
For any $\lambda \ge 0$,
\[
  s_{\opt}\,\phi_\lambda(S)
  = s_{\opt}\,v(S) + s_{\tr}\,\lambda\bigl(w(S)-L\bigr)
  \;\ge\;
  s_{\opt}\,v(S).
\]
Optimizing over all such $S$ yields
$s_{\opt} V_{\vec{k}}(\lambda) \ge s_{\opt} \OPT_{\vec{k}}$.
If $\opt = \max$ (so $s_{\opt}=+1$), this implies
$V_{\vec{k}}(\lambda) \ge \OPT_{\vec{k}}$ for all $\lambda$, hence
$\LR_{\vec{k}} = \min_{\lambda \ge 0} V_{\vec{k}}(\lambda) \ge \OPT_{\vec{k}}$.
The minimization case is analogous.
\end{proof}

We now argue that the Lagrangian relaxation \eqref{prob:lagrangian-k} can be solved efficiently.

\begin{lemma}\label{lem:lagrangian-solve-intersection}
For every configuration vector $\vec{k}$ and any $\lambda \ge 0$, the value $V_{\vec{k}}(\lambda)$ and a set $S_{\vec{k},\lambda} \in \scf_{\vec{k}}$ attaining this value can be computed in polynomial time.
Moreover, a multiplier $\lambda^\star_{\vec{k}}$ satisfying
$V_{\vec{k}}(\lambda^\star_{\vec{k}}) = \LR_{\vec{k}}$ can be computed in polynomial time.
\end{lemma}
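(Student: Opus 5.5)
For fixed $\lambda\ge 0$, the constant term $-s_{\opt}s_{\tr}\lambda L$ in $\phi_\lambda(S)$ does not depend on $S$. Hence computing $V_{\vec k}(\lambda)$ amounts to optimizing the modular weight $c_\lambda(e):=v_e+s_{\opt}s_{\tr}\lambda w_e$ over $\scf_{\vec k}$. By the discussion above, $\scf_{\vec k}$ is exactly the family of common bases of the two matroids $\scm_{\le K}$ and $\scm^P_{\vec k}$. Independence oracles for both are immediate: for the truncation, check $S\in\sci$ and $|S|\le K$; for the partition matroid, count $|S\cap B_i|$.

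I therefore invoke the classical weighted matroid intersection algorithm (Edmonds; e.g.\ Frank's weight-splitting or the shortest augmenting path algorithm). It computes, in oracle-polynomial time, a maximum- or minimum-weight common independent set of each cardinality $j=0,1,\ldots$. We take cardinality $K$. If no common independent set of size $K$ exists, then $\scf_{\vec k}=\emptyset$ and the configuration is skipped. The weights $c_\lambda(e)$ may be negative, which is harmless because we fix the cardinality to $K$; alternatively, add a large constant to every element. This yields $S_{\vec k,\lambda}$ and $V_{\vec k}(\lambda)=\phi_\lambda(S_{\vec k,\lambda})$ in polynomial time, provided $\lambda$ has polynomial bit-length.

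For computing $\lambda^\star_{\vec k}$, I use \Cref{lem:Vk-convex-concave}: $V_{\vec k}$ is piecewise linear and convex when $\opt=\max$ (concave when $\opt=\min$). Its breakpoints lie among values $\lambda$ where two sets tie, that is, where $c_\lambda$ orders two elements equally. Such values have the form $\lambda=(v_e-v_f)/(w_f-w_e)$ for pairs $e,f\in E$.

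The key structural fact is the following. For weighted matroid intersection, the optimum for a fixed cardinality depends only on the relative order of the weights? That is false in general for intersection, so I avoid relying on it. Instead, I use Megiddo's parametric search, which applies because weighted matroid intersection is a strongly polynomial combinatorial algorithm using only additions and comparisons of weights. I simulate the algorithm with symbolic weights $c_\lambda$ that are affine in $\lambda$. At each comparison, I resolve the critical value by evaluating the sign of the slope of $V_{\vec k}$ at that value: run the algorithm at $\lambda$ and at $\lambda\pm\delta$, or equivalently break ties lexicographically by $s_{\opt}s_{\tr}w$ and read off $w(S)-L$. This locates the minimizer (resp.\ maximizer) of the convex (resp.\ concave) function in polynomially many oracle-polynomial evaluations.

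A simpler alternative, which I would try first, is a Newton/Dinkelbach-type iteration. Start from $\lambda=0$. Compute the optimal set $S$ with extreme tie-breaking. If $s_{\tr}(w(S)-L)\ge 0$ (the slope condition indicates optimality), stop. Otherwise, move $\lambda$ to the intersection point of the current line $\phi_\lambda(S)$ with the line of the optimizer obtained at a large $\lambda$. Because $V_{\vec k}$ has at most polynomially many pieces as a function of distinct $w$-values of optimal sets (a Radzik-type bound), this terminates in polynomially many steps.

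The main obstacle is rigorously bounding the number of iterations, or the bit-length of the breakpoint. Parametric search gives the cleanest guarantee, since every breakpoint is a rational with polynomial-size numerator and denominator. I would cite it as the fallback, together with the observation that the optimum over $\lambda\ge 0$ is attained either at $\lambda=0$ or at a breakpoint where the one-sided slopes change sign.
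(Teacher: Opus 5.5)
Your proposal is correct and follows the paper's proof. For fixed $\lambda$ you reduce $V_{\vec k}(\lambda)$ to a cardinality-$K$ weighted matroid intersection of $\scm_{\le K}$ and $\scm^P_{\vec k}$ (the paper cites Brezovec et al.\ for this step), and you obtain $\lambda^\star_{\vec k}$ by Megiddo's parametric search, spelling out the one-sided-slope comparison test that the paper leaves implicit. You should drop two side remarks, since the Megiddo argument needs neither. First, the claim that breakpoints of $V_{\vec k}$ occur only at element-pair ties $\lambda=(v_e-v_f)/(w_f-w_e)$ is false for matroid intersection, because breakpoints come from ties between whole sets $S,S'$. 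Second, the ``polynomially many pieces'' justification for the Newton alternative is unsupported.
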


\begin{proof}
Fix a configuration vector $\vec{k}$ and $\lambda \ge 0$, and define the modified weight function $v^\lambda$ as follows
\[
  v^\lambda_e := v_e + s_{\opt}s_{\tr}\lambda w_e \quad \text{for all } e \in E.
\]
Then, for any $S \subseteq E$, we can write
\[
  \phi_\lambda(S)
  = v(S) + s_{\opt}s_{\tr}\lambda\bigl(w(S)-L\bigr)
  = \sum_{e \in S} v^\lambda_e \;-\; s_{\opt}s_{\tr}\lambda L.
\]
Hence, for fixed $\lambda$, optimizing $\phi_\lambda$ over $S \in \scf_{\vec{k}}$ is equivalent (up to an additive constant independent of $S$) to solving the following optimization problem
\[
  \opt_{S \in \scf_{\vec{k}}} \sum_{e \in S} v^\lambda_e.
\]

For both $\scf\in\{\IS,\bases\}$, the restriction to $\scf_{\vec{k}}$ imposes a fixed cardinality constraint $|S|=K$ (where $K=\sum_i k_i$, and in the bases case necessarily $K=\rank(\scm)$, otherwise $\scf_{\vec{k}}=\emptyset$). Consequently, the problem of optimizing a linear objective over $\scf_{\vec{k}}$ is a cardinality-constrained weighted matroid intersection instance between $\scm$ (or $\scm_{\leq K}$) and the partition matroid $\scm^P_{\vec{k}}$. This problem is solvable in polynomial time by the algorithm of Brezovec et al.~\cite{brezovec1986two}. Therefore, for any fixed $\vec{k}$ and $\lambda\ge 0$, we can compute $V_{\vec{k}}(\lambda)$ in polynomial time.

It remains to find a multiplier $\lambda^\star_{\vec{k}}\ge 0$ such that
$V_{\vec{k}}(\lambda^\star_{\vec{k}})=\LR_{\vec{k}}$.
Since the objective weights $v^\lambda_e$ depend linearly on $\lambda$, we can apply Megiddo's parametric-search framework~\cite{megiddo1978combinatorial} to compute such an optimal multiplier $\lambda^\star_{\vec{k}}$ in polynomial time.
\end{proof}

Let $\lambda^\star_{\vec{k}}$ be such an optimal multiplier, and define the family of
Lagrangian--optimal solutions
\[
  \sco_{\vec{k}}
  := \{\, S \in \scf_{\vec{k}} : \phi_{\lambda^\star_{\vec{k}}}(S) = \LR_{\vec{k}} \,\}.
\]
For every $S \in \sco_{\vec{k}}$ we have $\phi_{\lambda^\star_{\vec{k}}}(S) = \LR_{\vec{k}}$,
which can be rewritten as
\begin{equation}
\label{eq:v-vs-LR-k}
  v(S)=\LR_{\vec{k}} - s_{\opt} s_{\tr}\,\lambda^\star_{\vec{k}}\bigl(w(S)-L\bigr).
\end{equation}

We next consider the degenerate edge case in which $\lambda^\star_{\vec{k}}=0$. In this case, the Lagrangian objective reduces to the original objective over $\scf_{\vec{k}}$. Let
\[
S^0_{\vec{k}}\in \operatorname*{arg\;opt}_{S\in\mathcal F_{\vec{k}}} v(S)
\]
be an unconstrained optimum chosen to maximize $s_{\tr}(w(S)-L)$ among all unconstrained optima. If $S^0_{\vec{k}}$ is feasible, we add it directly to the candidate pool. If no unconstrained optimum is feasible, this configuration contributes no candidate in the case $\lambda^\star_{\vec{k}}=0$. Henceforth, we assume that $\lambda^\star_{\vec{k}}>0$.


\subsection{Spread and Adjacency}

We introduce a bound on the change in $v(\cdot)$ under a one-per-bin exchange, and then prove an adjacency property of $\sco_{\vec{k}}$ showing that any two solutions in $\sco_{\vec{k}}$ are connected by a sequence of such exchanges. For each bin $B_i$ define its \emph{spread}
\[
  \Delta_i \;:=\; \max_{e\in B_i} v_e \;-\; \min_{e\in B_i} v_e,
\]
with the convention that $\Delta_i = 0$ if $B_i = \emptyset$. For a configuration vector $\vec{k}$, define the total spread
\[
  \Sigma_\Delta(\vec{k}) \;:=\; \sum_{i:\,k_i>0} \Delta_i.
\]

\begin{lemma}\label{lem:spread-bound}
Fix a configuration vector $\vec{k}$ satisfying \eqref{eq:config-constraints}.
\begin{enumerate}
  \item\label{it:spread-step}
  If $S,S'\in\scf_{\vec{k}}$ differ by exchanging at most one element per bin, then $|v(S)-v(S')|\le \Sigma_\Delta(\vec{k})$.

  \item\label{it:spread-total}
  It holds that $\Sigma_\Delta(\vec{k}) \le 8\eps\,\OPT$.
\end{enumerate}
\end{lemma}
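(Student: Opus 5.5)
For part~\ref{it:spread-step} the plan is a direct termwise comparison; for part~\ref{it:spread-total} we bound each bin's spread from the bin definitions, count the bins with $k_i>0$ using \eqref{eq:config-constraints}, and compare $\hat V$ to $\OPT$ via Lemma~\ref{lem:Vhat}.

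For part~\ref{it:spread-step}, take $S,S'\in\scf_{\vec k}$. Both respect $\vec k$, so $|S\cap B_i|=|S'\cap B_i|=k_i$ for every $i$. Hence within each bin $|(S\setminus S')\cap B_i|=|(S'\setminus S)\cap B_i|$. The hypothesis that they differ by at most one element per bin means this common size is at most $1$. Let $I$ be the set of bins where they differ. For $i\in I$ write $(S\setminus S')\cap B_i=\{a_i\}$ and $(S'\setminus S)\cap B_i=\{b_i\}$. Then
\[
v(S)-v(S')=\sum_{i\in I}\bigl(v_{a_i}-v_{b_i}\bigr).
\]
Since $a_i,b_i\in B_i$, each term has absolute value at most $\Delta_i$. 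Since $a_i\in S\cap B_i$, we have $k_i\ge 1$ for every $i\in I$. The triangle inequality then gives $|v(S)-v(S')|\le\sum_{i\in I}\Delta_i\le\Sigma_\Delta(\vec k)$, using that all $\Delta_i$ are nonnegative.

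For part~\ref{it:spread-total}, first bound the individual spreads.
\begin{itemize}
\item $B_0$ consists of elements with $0\le v_e\le\eps\hat V$, so $\Delta_0\le\eps\hat V$.
\item Each large bin $B_i$, $i\ge1$, lies in a half-open interval of length $\eps^2\hat V$, so $\Delta_i\le\eps^2\hat V$.
\end{itemize}
Next count the large bins with $k_i>0$. Each such bin contributes at least $1$ to $\sum_{i=1}^T k_i$, which is at most $\lceil 2/\eps\rceil\le 2/\eps+1$ by \eqref{eq:config-constraints}. Therefore
\[
\Sigma_\Delta(\vec k)\le \eps\hat V+\Bigl(\tfrac{2}{\eps}+1\Bigr)\eps^2\hat V=(3\eps+\eps^2)\hat V\le 4\eps\hat V,
\]
where the last step assumes w.l.o.g.\ that $\eps\le 1$.

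Finally, Lemma~\ref{lem:Vhat} gives $\hat V\le 2\OPT$ in both cases: $\hat V\le\OPT$ when $\opt=\max$, and $\hat V\le 2\OPT$ when $\opt=\min$. Hence $\Sigma_\Delta(\vec k)\le 8\eps\,\OPT$.

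The only delicate point is interpreting ``exchanging at most one element per bin'' so that removed and added elements are matched within the same bin. This matching is forced by both sets respecting $\vec k$, and the argument makes it explicit before bounding the terms.
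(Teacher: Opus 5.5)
Your proposal is correct and follows the paper's proof essentially step for step. Part~1 is the same per-bin comparison of the (at most one) removed and added element, and part~2 uses the same ingredients: $\Delta_0\le\eps\hat V$, $\Delta_i\le\eps^2\hat V$, at most $\lceil 2/\eps\rceil$ large bins with $k_i>0$, and $\hat V\le 2\OPT$. Your explicit use of $\lceil 2/\eps\rceil\le 2/\eps+1$ with $\eps\le 1$ spells out the step the paper compresses into $\lceil 2/\eps\rceil\eps^2\hat V\le 3\eps\hat V$, which relies on the same assumption.
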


\begin{proof}
\emph{Part \eqref{it:spread-step}.}
Fix a bin $B_i$ with $k_i>0$. By assumption, passing from $S$ to $S'$ removes at most one element from $B_i$ and therefore (to maintain $|S'\cap B_i|=|S\cap B_i|$, since $S,S'\in\scf_{\vec{k}}$) it also adds at most one element to $B_i$. Hence the contribution of bin $B_i$ to the
value can change by at most $\Delta_i$, i.e.,
$|v(S\cap B_i)-v(S'\cap B_i)|\le \Delta_i$.
Summing over all bins with $k_i>0$ yields
\[
  |v(S)-v(S')|
  \le \sum_{i:\,k_i>0} |v(S\cap B_i)-v(S'\cap B_i)|
  \le \sum_{i:\,k_i>0} \Delta_i
  = \Sigma_\Delta(\vec{k}).
\]

\noindent
\emph{Part \eqref{it:spread-total}.}
By the bucketing construction, $\Delta_0\le \eps \hat{V}$ and $\Delta_i\le \eps^2 \hat{V}$ for all
$i\ge 1$. Moreover, \eqref{eq:config-constraints} implies that the number of indices $i\ge 1$ with $k_i>0$ is at most $\lceil 2/\eps\rceil$. Therefore,
\[
  \Sigma_\Delta(\vec{k})
  = \Delta_0\cdot \mathbf{1}[k_0>0] + \sum_{i\ge 1:\,k_i>0}\Delta_i
    \le \eps \hat{V} + \left\lceil\frac{2}{\eps}\right\rceil \cdot \eps^2 \hat{V}
  \le \eps \hat{V} + 3\eps \hat{V}
  \le 4\eps \hat{V}.
\]
Finally, by Lemma~\ref{lem:Vhat}, we have $\hat{V}\le \OPT$ when $\opt=\max$, and $\hat{V}\le 2\OPT$ when
$\opt=\min$; thus, in both cases $\Sigma_\Delta(\vec{k}) \le 8\eps\,\OPT$.
\end{proof}

We next establish the adjacency property of $\sco_{\vec{k}}$, namely that any two sets in $\sco_{\vec{k}}$ can be connected by a sequence of exchanges that swaps at most one element per bin. Our proof uses the following theorem of Brezovec et al.~\cite{brezovec1988matroid}.

\begin{lemma}[Theorem~4 from~\cite{brezovec1988matroid}]\label{claim:BCG}
Let $\scm_1=(E,\sci_1)$ be a matroid and let $\scm_2=(E,\sci_2)$ be a partition matroid on the same ground set $E$, given by a partition $E=B_0\cup\cdots\cup B_T$ and capacities $(k_0,\ldots,k_T)$. Let $c : E \to \mathbb{R}$ be a weight function, and let $S$ be a common base of $\scm_1$ and $\scm_2$.

Define the directed bipartite \emph{exchange graph} $B_S=(U\cup V,A)$ by $U := S, V := E\setminus S$, (i.e., one vertex for each element), and for each $e\in U$ and $f\in V$ define arcs as follows:
\begin{itemize}
  \item if $S-e+f \in \sci_1$, add an arc $(e,f)$ with cost $c(f)-c(e)$;
  \item if $S-e+f \in \sci_2$, add an arc $(f,e)$ with cost $0$.
\end{itemize}

Let $D_S$ be the directed graph obtained from $B_S$ by compressing, for each $i\in\{0,\ldots,T\}$, all vertices in $V\cap B_i$ into a single vertex $v_i$, and then deleting parallel arcs while keeping, for each ordered pair of vertices, only the arc of minimum length.

For the graph $D_S$ the following hold:
\begin{itemize}
    \item There are no negative cycles in $D_S$ if and only if $S$ is optimal for the weighted matroid intersection instance.

    \item If $D_S$ contains a directed cycle $C$ of negative total length, then there exists a common base $T$ of $\scm_1$ and $\scm_2$ that is obtained from $S$ by exchanging elements along $C$ and satisfies $c(T) < c(S).$
\end{itemize}
\end{lemma}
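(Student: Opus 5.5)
The lemma is cited from~\cite{brezovec1988matroid}, but I would reprove it along the lines of the classical exchange-graph analysis of weighted matroid intersection (Frank, Krogdahl), taking advantage of the simple structure of $\scm_2$. The key observation is this: $S$ is a base of $\scm^P$, so every bin is tight, $|S\cap B_i|=k_i$. Hence $S-e+f\in\sci_2$ holds if and only if $e$ and $f$ lie in the same bin. The $0$-cost arcs of $B_S$ are therefore exactly the arcs $(f,e)$ with $f\in V\cap B_i$ and $e\in S\cap B_i$. After compression, each bin contributes a single vertex $v_i$ with $0$-arcs $v_i\to e$ for all $e\in S\cap B_i$, and incoming arcs $e\to v_i$ of cost $\min_{f\in V\cap B_i,\,S-e+f\in\sci_1}(c(f)-c(e))$. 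Thus a simple cycle in $D_S$ is an alternating sequence $e_1\to v_{i_1}\to e_2\to v_{i_2}\to\cdots$ that passes each bin vertex at most once. Realizing it means removing one element and adding one element per visited bin, with all bins distinct.

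\emph{Non-optimal $\Rightarrow$ negative cycle.} Let $T$ be a common base with $c(T)<c(S)$. By the symmetric (bijective) base exchange property in $\scm_1$, there is a perfect matching between $S\setminus T$ and $T\setminus S$ using arcs $(e,f)$ with $S-e+f\in\sci_1$. Similarly, in $\scm_2$ there is a perfect matching using arcs $(f,e)$ within the same bin. The union of these two matchings is a disjoint union of directed cycles in $B_S$, whose total cost is $c(T)-c(S)<0$, so one of them, $C$, is negative. Mapping $C$ to $D_S$ by contracting bin vertices gives a closed walk whose length is at most that of $C$, because compression keeps minimum-cost parallel arcs. A negative closed walk decomposes into simple cycles, one of which is negative. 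Conversely, if $S$ is optimal, the second bullet (once proved) shows that no negative cycle can exist, which gives the equivalence.

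\emph{Negative cycle $\Rightarrow$ improving exchange.} Take a negative cycle $C$ in $D_S$ with the minimum number of arcs. Lift it to $B_S$ by replacing each bin vertex $v_i$ with the representative $f_i\in V\cap B_i$ realizing the cheapest incoming arc. Let $X$ be the removed elements of $S$ and $Y=\{f_i\}$ the added ones, and set $T=S-X+Y$. Since the cycle uses each bin at most once and swaps one out for one in, $T\in\sci_2$ with all bins still tight, and $c(T)-c(S)=c(C)<0$. The step I expect to be the main obstacle is showing $T\in\sci_1$. My plan is the standard unique-matching lemma: if the arcs $(e,f)$ of $B_S$ restricted to $X\cup Y$ admit a \emph{unique} perfect matching, then $S-X+Y\in\sci_1$. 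I would argue by contradiction from the minimality of $C$. A second perfect matching would give a chord $(e_j,f_{i_l})$ with $S-e_j+f_{i_l}\in\sci_1$. This chord splits $C$ into two shorter cycles in $D_S$, each using at most the length of the original arcs it shortcuts, since $D_S$ keeps minimum-cost arcs. Their total length is at most $c(C)$, so one of them is negative, contradicting minimality. To make the ``at most'' strict, and to handle ties among representatives, I would first perturb $c$ generically (e.g., lexicographically by element index) so that all cycle lengths are distinct, and then pass back to $c$ at the end. Finally, $|T|=|S|$ and $T\in\sci_1\cap\sci_2$, so $T$ is a common base with $c(T)<c(S)$ obtained by exchanging along $C$.
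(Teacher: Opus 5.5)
The paper gives no proof of this lemma; it only cites Theorem~4 of Brezovec et al. So your reproof has to be complete on its own. Your first direction is fine. Note that only the one-sided bijective exchange $S-e+\sigma(e)\in\sci_1$ is needed there, not a symmetric one.

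There is a genuine gap, exactly at the step you flagged: showing $T\in\sci_1$. Write $C=e_1\to v_{i_1}\to e_2\to\cdots\to e_m\to v_{i_m}\to e_1$, and $\ell$ for $D_S$-lengths.

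A chord $(e_j,f_{i_l})$ yields the arc $(e_j,v_{i_l})$, but in a \emph{directed} cycle this closes only one new cycle, $e_j\to v_{i_l}\to e_{l+1}\to\cdots\to e_j$. The complementary segment of $C$ from $e_j$ to $v_{i_l}$ stays open, because closing it would need an arc $v_{i_l}\to e_j$, i.e.\ $e_j\in B_{i_l}$. The one cycle you do get has length $\ell(C)-\ell(C[e_j\to v_{i_l}])+\ell(e_j,v_{i_l})$, which may well be nonnegative. The fact that $D_S$ keeps the cheapest of \emph{parallel} arcs says nothing about a chord versus the segment it bypasses.

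The perturbation cannot rescue this. Every arc cost is a difference $c(f)-c(e)$, so \emph{every} perfect matching between $X$ and $Y$ costs exactly $c(Y)-c(X)$, for every $c$. Hence an alternative matching $N$ whose union with the bin matching $f_{i_j}\mapsto e_{j+1}$ is again a single Hamiltonian cycle on $X\cup Y$ gives a cycle with the same number of arcs as $C$ and length at most $\ell(C)$, whatever $c$ is. In that case arc-minimality alone yields no contradiction.

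The missing idea is to use the whole alternative matching, together with a finer choice of $C$: minimum number of arcs and, among those, minimum length.
\begin{itemize}
\item Since the bins on $C$ are distinct, the $\scm_2$-arcs inside $X\cup Y$ are exactly the bin matching. So $N_C$ and $N$ induce permutations $\pi_C,\pi_N$ of $X$, which are distinct because $Y$ has one element per bin. The step $e\to\pi(e)$ is the path $e\to v_{\mathrm{bin}(\pi(e))}\to\pi(e)$ in $D_S$.
\item $\pi_C$ is one $m$-cycle of length $\ell(C)=c(Y)-c(X)$, and $\pi_N$ has length at most $\ell(C)$.
\item If $\pi_N$ splits into several cycles, one of them is negative with fewer arcs, a contradiction.
\item Otherwise, consider the multigraph $\pi_C\uplus\pi_N$, of length at most $2\ell(C)$. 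Pick $e$ with $\pi_N(e)\neq\pi_C(e)$, and peel off the cycle that takes the $\pi_N$-step from $e$ and then follows $\pi_C$ back to $e$. It misses $\pi_C(e)$, so it is not Hamiltonian.
\item Decomposing the remainder into simple cycles yields at most one Hamiltonian cycle. All non-Hamiltonian cycles are nonnegative by arc-minimality. If there is no Hamiltonian one, this contradicts that the total length is at most $2\ell(C)<0$. If there is one, it has $2m$ arcs and length at most $2\ell(C)<\ell(C)$, contradicting length-minimality.
\end{itemize}
Thus $N_C$ is the unique perfect matching, and the unique-matching lemma gives $T\in\sci_1$.

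Note also that you prove the second bullet only for a specially chosen cycle. That restriction is unavoidable, since for an arbitrary negative cycle the bullet is false. Take a rank-$2$ matroid on $\{a,b,x,y\}$ with $x,y$ parallel, bins $\{a,x\},\{b,y\}$ with capacities $1$, $S=\{a,b\}$, $c(a)=c(b)=1$ and $c(x)=c(y)=0$. The cycle $a\to v_1\to b\to v_0\to a$ has length $-2$, but it yields $\{x,y\}\notin\sci_1$. A specially chosen cycle is all that Lemma~\ref{lem:exchange-path} needs.
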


Using this lemma, we obtain the following exchange-path guarantee.
\begin{lemma}\label{lem:exchange-path}
For any $S, S' \in \sco_{\vec{k}}$, we can find in polynomial time a sequence of solutions
$S = S_0, S_1, \ldots, S_\ell = S'$ such that:
\begin{enumerate}
  \item $S_j \in \sco_{\vec{k}}$ for all $j = 0, \ldots, \ell$;
  \item For each $j = 0, \ldots, \ell - 1$ and for each bin $B_i$,
        \[
          |(S_j \setminus S_{j+1}) \cap B_i|
          =
          |(S_{j+1} \setminus S_j) \cap B_i|
          \;\leq\; 1.
        \]
\end{enumerate}
\end{lemma}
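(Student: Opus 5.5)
We would reduce the statement to repeated use of Lemma~\ref{claim:BCG} on a slightly perturbed cost function that singles out $S'$. Fix $\vec k$ and $\lambda^\star:=\lambda^\star_{\vec k}$. Define the minimization cost $c_e := -s_{\opt}\, v^{\lambda^\star}_e$. By the proof of Lemma~\ref{lem:lagrangian-solve-intersection}, $\sco_{\vec k}$ is exactly the set of minimum-$c$ common bases of $\scm_1:=\scm_{\le K}$ and $\scm_2:=\scm^P_{\vec k}$. Now perturb the cost to
\[
c'_e := c_e - \delta\cdot\mathbf 1[e\in S'].
\]
Here $\delta>0$ is small enough that $K\delta$ is less than the smallest positive gap between $c$-values of common bases. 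For example, we can treat $\delta$ symbolically, using lexicographic costs $(c_e,-\mathbf 1[e\in S'])$, so no explicit bound is needed.

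With this choice, $S'$ is the unique minimizer of $c'$ among minimum-$c$ common bases. Moreover, every common base $T$ with $c'(T)<c'(S)$, where $S\in\sco_{\vec k}$, satisfies two properties. First, $c(T)\le c(S)$, so $T\in\sco_{\vec k}$. Second, $|T\cap S'|>|S\cap S'|$.

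The iteration runs as follows. Set $S_0:=S$. While $S_j\neq S'$, we note that $S_j$ is not $c'$-optimal, since $c'(S')<c'(S_j)$. By the first part of Lemma~\ref{claim:BCG}, the compressed graph $D_{S_j}$ built with cost $c'$ therefore contains a negative cycle. We compute one with Bellman--Ford, choosing it to be simple, for instance a minimum-mean cycle or a shortest negative cycle as in Brezovec et al. By the second part of Lemma~\ref{claim:BCG}, exchanging along this cycle yields a common base $S_{j+1}$ with $c'(S_{j+1})<c'(S_j)$. By the observation above, $S_{j+1}\in\sco_{\vec k}$ and $|S_{j+1}\cap S'|$ strictly increases. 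Hence the process terminates with $S_\ell=S'$ after at most $K\le |E|$ steps, each in polynomial time. This gives item~1 and the polynomial running time.

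For item~2 we inspect the structure of the cycle. Every arc of $D_{S_j}$ entering a vertex $e\in U$ comes from a compressed vertex $v_i$, and represents that $S_j-e+f\in\sci_2$ for some $f\in B_i$. Since $S_j$ is tight in every bin ($|S_j\cap B_i|=k_i$), this forces $e\in B_i$. So each visit of the cycle to $v_i$ adds one element of $B_i$ and removes one element of $B_i$. A simple cycle visits each $v_i$ at most once, so at most one element per bin is added and at most one is removed. Since both $S_j$ and $S_{j+1}$ respect $\vec k$, these counts are equal.

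The main obstacle is making sure that the exchange prescribed by Lemma~\ref{claim:BCG} along the chosen cycle really is a single simple cycle in the compressed graph. This is what yields the one-per-bin property, rather than a composite exchange that touches a bin twice. We would handle it by always taking a minimal negative cycle, as in Brezovec et al.'s argument, so that the resulting set is a common base. Alternatively, we would note that if a closed walk revisits some $v_i$, it decomposes into shorter cycles, at least one of which is negative, and we recurse on that cycle.
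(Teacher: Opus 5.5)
Your proof is correct and follows essentially the same route as the paper. You perturb the Lagrangian cost so that $S'$ beats the current set, invoke Lemma~\ref{claim:BCG} to obtain a simple negative cycle in the compressed graph, and use that a simple cycle visits each bin vertex at most once (with tightness of every bin) to get the one-per-bin exchange. The only difference is that you use one fixed perturbation $c'=c-\delta\mathbf{1}[\cdot\in S']$, which makes $S'$ the unique global optimum, with $|S_j\cap S'|$ as the progress measure; the paper instead re-perturbs at each induction step with $\pm 1$ offsets and a large penalty outside $S\cup S'$, so that every intermediate set stays inside $S\cup S'$.
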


\begin{proof}[Proof of Lemma~\ref{lem:exchange-path}]
Fix a configuration vector $\vec{k}$, and let $v^* = v^{\lambda^*_{\vec{k}}}$. W.l.o.g., assume that $\sco_{\vec{k}}$ consists of \emph{minimum}-cost bases with respect to $v^\ast$ (otherwise, replace $v^\ast$ by $-v^\ast$).

We prove the statement by induction on $x := |S' \setminus S|$. The case $x=0$ is trivial. Assume that $x\ge 1$ and the lemma holds for every pair $T,T' \in \sco_{\vec{k}}$ satisfying $|T'\setminus T|<x$. We prove the claim for the given pair $S,S'$ with $|S'\setminus S|=x$.

Fix $\delta>0$ and define $v' : E \to \mathbb{R}\cup\{+\infty\}$ by
\[
  v'(e)=
  \begin{cases}
    v^\ast(e)+\delta, & e\in S\setminus S',\\
    v^\ast(e)-\delta, & e\in S'\setminus S,\\
    v^\ast(e),             & e\in S\cap S',\\
    +\infty,               & e\notin S\cup S'.
  \end{cases}
\]
Then $v'(S)=v^\ast(S)+x\delta$ and $v'(S')=v^\ast(S')-x\delta$, and hence $v'(S') = v^\ast(S')-x\delta < v^\ast(S)+x\delta = v'(S)$, where we used $v^\ast(S)=v^\ast(S')$ since $S,S'\in\sco_{\vec{k}}$. To avoid choosing an explicit infinitesimal $\delta$, we implement this perturbation lexicographically. Let $U$ be an integer upper bound on
$\sum_{e\in S}|v^\ast(e)|$ over all $S\in\scf_{\vec k}$ (e.g.,\ $U:=K\cdot\max_{e\in E}|v^\ast(e)|$),
and set $M:=2U+1$.
Define an auxiliary weight function $\tilde v:E\to\mathbb R$ by
\[
\tilde v(e)=
\begin{cases}
M\cdot v^\ast(e)+1, & e\in S\setminus S',\\
M\cdot v^\ast(e)-1, & e\in S'\setminus S,\\
M\cdot v^\ast(e),   & e\in S\cap S',\\
M\cdot v^\ast(e)+M^2+K, & e\notin S\cup S'.
\end{cases}
\] 
We claim that $\tilde v(S') < \tilde v(S)$. Indeed, since $S,S'\in\sco_{\vec k}$ we have
$v^*(S)=v^*(S')$. Moreover, letting $x:=|S'\setminus S|=|S\setminus S'|$, the definition of $\tilde v$
implies
\[
\tilde v(S)=\sum_{e\in S}\tilde v(e)
= M\sum_{e\in S} v^*(e) + |S\setminus S'|
= M\,v^*(S) + x,
\]
and similarly
\[
\tilde v(S')=\sum_{e\in S'}\tilde v(e)
= M\sum_{e\in S'} v^*(e) - |S'\setminus S|
= M\,v^*(S') - x.
\]
Therefore,
\[
\tilde v(S)-\tilde v(S')
= M\bigl(v^*(S)-v^*(S')\bigr) + 2x
= 2x > 0,
\]
and hence $\tilde v(S')<\tilde v(S)$.

\medskip
Since the weights $v^\ast(e)$ may be rational, we first scale them to integers.
Let $D$ be a common denominator of $\{v^\ast(e) : e\in E\}$, and define
$\bar v^\ast(e):=D\cdot v^\ast(e)$ for all $e\in E$.
Then $\bar v^\ast(e)\in \mathbb Z$ for all $e$, and for any two common bases $T_1,T_2$
we have $\bar v^\ast(T_1)-\bar v^\ast(T_2)\in \mathbb Z$.

We redefine $\tilde v$ using $\bar v^\ast$ as follows.
Let $U:=K\cdot \max_{e\in E}|\bar v^\ast(e)|$ and set $M:=2U+1$.
Define $\tilde v:E\to \mathbb Z$ by
$\tilde v(e)=M\cdot \bar v^\ast(e)+1$ for $e\in S\setminus S'$,
$\tilde v(e)=M\cdot \bar v^\ast(e)-1$ for $e\in S'\setminus S$,
$\tilde v(e)=M\cdot \bar v^\ast(e)$ for $e\in S\cap S'$,
and 
$\tilde v(e)=M\cdot \bar v^\ast(e)+M^2+2MU+2K$ for $e\notin S\cup S'$.

Our redefinition preserves the computation for $S$ and $S'$.
In particular, using $\bar v^\ast(S)=\bar v^\ast(S')$ we have 
$\tilde v(S)=M\cdot \bar v^\ast(S)+x$ and $\tilde v(S')=M\cdot \bar v^\ast(S')-x$,
so $\tilde v(S')<\tilde v(S)$.

\begin{claim}
\label{claim:min_common_base}
Any $\tilde v$-minimum common base is (i) a $v^\ast$-minimum common base (and hence belongs to $\sco_{\vec k}$), and (ii) consists solely of elements from $S\cup S'$.
\end{claim}

\noindent	{\em Proof.}
We first show that minimizing $\tilde v$ enforces minimization of $v^\ast$ (lexicographically).
Fix any common base $T\in\scf_{\vec k}$. 
Write $\tilde v(T)= M\cdot \bar v^\ast(T) + \tau(T)$,
where $\tau(T)$ is the additive tie-breaking term induced by
the $\pm 1$ offsets (and $0$ offsets) in the definition of $\tilde v$.
Since $|T|=K$, we have the uniform bound $|\tau(T)|\le K$ whenever $T\subseteq S\cup S'$.
Moreover, if $T$ contains some element $e\notin S\cup S'$, then
\[
\tilde v(T)
\ge M\cdot \bar v^\ast(T) - K + (M^2+2MU+2K)
\ge -MU - K + (M^2+2MU+2K)
= M^2+MU+K.
\]
On the other hand, for every common base $T\subseteq S\cup S'$ we have
\[
\tilde v(T)=M\cdot \bar v^\ast(T)+\tau(T)\le MU+K.
\]
Therefore any $\tilde v$-minimum common base must be contained in $S\cup S'$.

Now let $T_1,T_2\subseteq S\cup S'$ be any two common bases. Then
\[
\tilde v(T_1)-\tilde v(T_2)
= M\bigl(\bar v^\ast(T_1)-\bar v^\ast(T_2)\bigr) + \bigl(\tau(T_1)-\tau(T_2)\bigr),
\]
and $|\tau(T_1)-\tau(T_2)|\le 2K$.
Choose $U$ so that $U\ge K\cdot \max_{e\in E}|\bar v^\ast(e)|$, and set $M:=2U+1$ as above.
Then $M>2K$. Moreover, since $\bar v^\ast(T_1),\bar v^\ast(T_2)\in\mathbb Z$, if
$\bar v^\ast(T_1)>\bar v^\ast(T_2)$ then $\bar v^\ast(T_1)-\bar v^\ast(T_2)\ge 1$, and hence
\[
\tilde v(T_1)-\tilde v(T_2)
= M\bigl(\bar v^\ast(T_1)-\bar v^\ast(T_2)\bigr)+\bigl(\tau(T_1)-\tau(T_2)\bigr)
\ge M-2K>0.
\]
Therefore $\tilde v(T_1)>\tilde v(T_2)$.
Consequently, among all common bases contained in $S\cup S'$, every $\tilde v$-minimum base is also
$v^\ast$-minimum. 
Moreover, as shown above, if a common base $T$ contains an element outside $S\cup S'$, then
$\tilde v(T)\ge M^2+MU+K$, whereas every common base $T\subseteq S\cup S'$ satisfies $\tilde v(T)\le MU+K$.
Hence no $\tilde v$-minimum common base uses elements outside $S\cup S'$.
Therefore, any $\tilde v$-minimum common base belongs to $\sco_{\vec k}$ and is contained in $S\cup S'$.
\hfill $\square$

\vspace{1mm}

Apply Lemma~\ref{claim:BCG} to the pair $(\scm_{\le K},\scm^P_{\vec{k}})$, the weight function $\tilde v$, and the current base $S$. Since $\tilde v(S')<\tilde v(S)$, the base $S$ is not $\tilde v$-optimal, so the compressed graph $D_S$ contains a negative cycle. Let $C$ be a negative directed cycle in $D_S$
with the minimum number of arcs. Lemma~\ref{claim:BCG} yields a common base $S''$ obtained from $S$ by exchanging along $C$ such that $\tilde v(S'')<\tilde v(S)$. 
By the choice of $\tilde v(e)$ for $e\notin S\cup S'$, we have $S''\subseteq S\cup S'$.

Since $D_S$ is bipartite, any directed cycle $C$ alternates between vertices $e\in S$ and contracted bin vertices $v_i$. We chose $C$ with the minimum number of arcs, hence $C$ cannot visit the same $v_i$ twice; otherwise $C$ would split into two directed cycles, and at least one of them would still have negative total length but fewer arcs, contradicting minimality. Thus, $C$ visits each bin at most once, so the corresponding exchange inserts at most one element from each bin $B_i$, i.e.\ $|(S''\setminus S)\cap B_i|\le 1$. Finally, $S$ and $S''$ are both bases of the partition matroid $\scm^P_{\vec k}$, so in each bin $B_i$ an insertion must be matched by a deletion from the same bin in order to keep $|S\cap B_i|=k_i$. Hence, the numbers of inserted and deleted elements match in every bin:
\[
  |(S\setminus S'')\cap B_i|
  =
  |(S''\setminus S)\cap B_i|
  \le 1
  \qquad \forall i \in \{0,\ldots,T\}.
\]

Since $S\in\sco_{\vec k}$ is $v^\ast$-minimum, we have $v^\ast(T)\ge v^\ast(S)$ for every common base $T$.
In particular, for every common base $T\subseteq S\cup S'$ we have $\bar v^\ast(T)\ge \bar v^\ast(S)$.
Now $S''\subseteq S\cup S'$ and $\tilde v(S'')<\tilde v(S)$.
If $\bar v^\ast(S'')>\bar v^\ast(S)$ then $\tilde v(S'')-\tilde v(S)\ge M-2K>0$, a contradiction.
Hence $\bar v^\ast(S'')=\bar v^\ast(S)$, i.e., $v^\ast(S'')=v^\ast(S)$, and therefore $S''\in\sco_{\vec k}$.

Finally, since the inserted elements come from $E\setminus S$ and, by the definition of $\tilde v$, must lie in $S'\setminus S$, we add at least one element of $S'\setminus S$; thus $|S'\setminus S''|<|S'\setminus S|=x$.

The induction hypothesis applies to the pair $(S'',S')$. Hence there exists a sequence $S''=S_0,S_1,\ldots,S_m=S'$ with $S_j\in\sco_{\vec{k}}$ for all $j$ and such that each transition $S_j\to S_{j+1}$ exchanges at most one element per bin. Appending the initial step $S\to S''$ to this sequence yields the desired path from $S$ to $S'$.

The above inductive argument yields an explicit polynomial-time procedure to construct the sequence.
Given $S,S'\in\sco_{\vec k}$, form the perturbed weight function $\tilde v$ as in the proof.
Since $\tilde v(S')<\tilde v(S)$, the base $S$ is not $\tilde v$-optimal, so by Lemma~\ref{claim:BCG}
the compressed exchange graph $D_S$ contains a negative directed cycle, which can be found in polynomial time (e.g.\ by Bellman--Ford; extracting a \emph{simple} negative cycle ensures no bin-vertex $v_i$ is repeated). Exchanging along this cycle produces $S''\in\sco_{\vec k}$ that differs from $S$ by at most one element per bin and satisfies $|S'\setminus S''|<|S'\setminus S|$. Repeating for at most $|S'\setminus S|\le K$ iterations constructs $S=S_0,S_1,\ldots,S_\ell=S'$.
\end{proof}

\subsection{Deriving an EPTAS}

We now show that for every feasible configuration vector $\vec{k}$, there is a Lagrangian–optimal solution that is both feasible for the original linear constraint and nearly optimal in value.

\begin{lemma}\label{lem:structural}
    Fix a configuration vector $\vec{k}$ with $\scf_{\vec{k}} \neq \emptyset$, and
    let $\lambda^\star_{\vec{k}}$, $\LR_{\vec{k}}$, and $\sco_{\vec{k}}$ be as defined above.
    Then, there exists a set $S^\star \in \sco_{\vec{k}}$ such that $w(S^\star) \tr L$ and the following holds:
    \begin{itemize}
      \item if $\opt = \min$, then $v(S^\star) \le \LR_{\vec{k}} + \Sigma_\Delta(\vec{k})$,
      \item if $\opt = \max$, then $v(S^\star) \ge \LR_{\vec{k}} - \Sigma_\Delta(\vec{k})$.
    \end{itemize}
\end{lemma}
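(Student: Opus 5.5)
We take the two extreme Lagrangian optima, connect them with Lemma~\ref{lem:exchange-path}, and pick the first feasible set on the path. Throughout, $\lambda^\star:=\lambda^\star_{\vec{k}}>0$ (the case $\lambda^\star=0$ is handled separately) and $g(S):=s_{\tr}(w(S)-L)$ denotes the signed slack, so $S$ is feasible iff $g(S)\ge 0$. By \eqref{eq:v-vs-LR-k}, every $S\in\sco_{\vec{k}}$ satisfies $v(S)=\LR_{\vec{k}}-s_{\opt}\lambda^\star g(S)$.

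\textbf{Step 1: extreme optima straddle feasibility.} We claim there exist $S^-,S^+\in\sco_{\vec{k}}$ with $g(S^-)\le 0\le g(S^+)$. Consider $\opt=\max$, so $V_{\vec{k}}$ is convex (Lemma~\ref{lem:Vk-convex-concave}) and $\lambda^\star$ minimizes it over $\lambda\ge 0$. For $\lambda$ near $\lambda^\star$, only sets in $\sco_{\vec{k}}$ attain $V_{\vec{k}}(\lambda)$, because $\scf_{\vec k}$ is finite. Hence the one-sided derivatives of $V_{\vec{k}}$ at $\lambda^\star$ are $\min_{S\in\sco_{\vec{k}}} s_{\tr}s_{\opt}(w(S)-L)$ on the left and $\max_{S\in\sco_{\vec{k}}} s_{\tr}s_{\opt}(w(S)-L)$ on the right. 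For $\opt=\max$ these equal $\min_{\sco_{\vec{k}}} g$ and $\max_{\sco_{\vec{k}}} g$. Since $\lambda^\star>0$ is an interior minimizer, the left derivative is $\le 0$ and the right derivative is $\ge 0$, which gives the claim. For $\opt=\min$ the function is concave with $\lambda^\star$ a maximizer, and $s_{\opt}=-1$ flips the signs, so the same conclusion follows. Algorithmically, $S^\pm$ are found by optimizing $g$ over $\sco_{\vec{k}}$ via a lexicographic weighted matroid intersection, but the existence statement only needs the derivative argument.

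\textbf{Step 2: walk the exchange path.} If $g(S^-)=0$ we take $S^\star=S^-$. Otherwise we apply Lemma~\ref{lem:exchange-path} to get $S^-=S_0,\dots,S_\ell=S^+$ inside $\sco_{\vec{k}}$, with one-per-bin exchanges. Let $j$ be the first index with $g(S_j)\ge 0$; it exists because $g(S_\ell)\ge0$, and $j\ge1$ with $g(S_{j-1})<0$. Set $S^\star:=S_j$, which is feasible and lies in $\sco_{\vec{k}}$.

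\textbf{Step 3: value bound.} Since $g(S_{j-1})<0$, the identity gives $v(S_{j-1})=\LR_{\vec{k}}-s_{\opt}\lambda^\star g(S_{j-1})$. For $\opt=\max$ this means $v(S_{j-1})>\LR_{\vec{k}}$; for $\opt=\min$ it means $v(S_{j-1})<\LR_{\vec{k}}$. By Lemma~\ref{lem:spread-bound}\eqref{it:spread-step}, $|v(S_j)-v(S_{j-1})|\le\Sigma_\Delta(\vec{k})$, so $v(S^\star)\ge \LR_{\vec{k}}-\Sigma_\Delta(\vec{k})$ for $\max$ and $v(S^\star)\le \LR_{\vec{k}}+\Sigma_\Delta(\vec{k})$ for $\min$.

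The main obstacle is Step 1, namely rigorously identifying the one-sided slopes of the piecewise-linear $V_{\vec{k}}$ at $\lambda^\star$ with the extremes of $g$ over $\sco_{\vec{k}}$. This reduces to finiteness of $\scf_{\vec{k}}$ together with the observation that, for $\lambda$ slightly off $\lambda^\star$, the optimizing sets must come from $\sco_{\vec{k}}$.
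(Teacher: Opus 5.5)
Your proposal is correct and follows essentially the same route as the paper. You take the two extreme Lagrangian optima $S^-,S^+$, show they straddle feasibility by optimality of $\lambda^\star$, connect them via Lemma~\ref{lem:exchange-path}, and apply the one-step spread bound where the path crosses feasibility. Your one-sided-derivative argument is a more explicit version of the paper's ``perturb $\lambda^\star$ slightly'' step. Choosing the first feasible set on the path is equivalent, for $\lambda^\star>0$, to the paper's choice of the index where $v$ crosses the Lagrangian value.
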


\begin{proof}
    Recall that $\sco_{\vec{k}}=\arg\opt_{S\in\scf_{\vec{k}}}\phi_{\lambda_{\vec{k}}^\star}(S)$.
    Choose $S^-,S^+\in\sco_{\vec{k}}$ such that
    $s_{\tr}(w(S^-)-L)$ is minimized over $\sco_{\vec{k}}$ and
    $s_{\tr}(w(S^+)-L)$ is maximized over $\sco_{\vec{k}}$
    (These can be computed in polynomial time by lexicographic optimization over $\scf_{\vec{k}}$:
    first optimize $\phi_{\lambda_{\vec{k}}^\star}$, and among ties minimize/maximize $s_{\tr}w(S)$, implemented via
    weighted matroid intersection).
    
    We claim that $s_{\tr}(w(S^-)-L)\le 0 \le s_{\tr}(w(S^+)-L)$.
    Otherwise, if $s_{\tr}(w(S)-L)>0$ for all $S\in\sco_{\vec{k}}$ (or $<0$ for all such $S$),
    then perturbing $\lambda_{\vec{k}}^\star$ slightly in the appropriate direction would improve $V_{\vec{k}}(\lambda)$,
    contradicting the optimality of $\lambda_{\vec{k}}^\star$ in the definition of $\LR_{\vec{k}}$.
    
    For simplicity, we present the remainder of the proof for the minimization case; the maximization case is analogous.
    Using Eq.~\eqref{eq:v-vs-LR-k} with $\opt = \min$ (so $s_{\opt}=-1$), we have $v(S) - \LR_{\vec{k}} = s_{\tr} \lambda^\star_{\vec{k}} \bigl(w(S)-L\bigr)$.
    Since $\lambda^\star_{\vec{k}} \ge 0$, it follows that $s_{\tr}\bigl(w(S^-)-L\bigr) \le 0$ implies $v(S^-) \le \LR_{\vec{k}}$, and $s_{\tr}\bigl(w(S^+)-L\bigr) \ge 0$ implies $v(S^+) \ge \LR_{\vec{k}}$.
    
    By Lemma~\ref{lem:exchange-path}, there is a sequence $S^- = S_0,S_1,\dots,S_\ell = S^+$ inside $\sco_{\vec{k}}$ such that each step exchanges at most one element per bin. Since $v(S_0) \le \LR_{\vec{k}}$ and $v(S_\ell) \ge \LR_{\vec{k}}$, there exists an index $j$ such that $v(S_j) \le \LR_{\vec{k}} \le v(S_{j+1})$.
    
    Combining this inequality with Lemma~\ref{lem:spread-bound} yields
    \[
      v(S_{j+1}) \le v(S_j) + \Sigma_\Delta(\vec{k}) \le
      \LR_{\vec{k}} + \Sigma_\Delta(\vec{k}).
    \]
    Moreover, since $v(S_{j+1}) \ge \LR_{\vec{k}}$ and $v(S_{j+1}) - \LR_{\vec{k}} = s_{\tr} \lambda^\star_{\vec{k}} \bigl(w(S_{j+1})-L \bigr)$ with $\lambda^\star_{\vec{k}} \ge 0$, it follows that $s_{\tr}\bigl(w(S_{j+1})-L\bigr) \ge 0$, i.e., $w(S_{j+1}) \tr L$.
    Thus $S^\star := S_{j+1}$ is feasible and satisfies the desired bound, which completes the proof.
    
    All objects used above can be computed in polynomial time. First, $\lambda^\star_{\vec k}$ is computed as in
    Lemma~\ref{lem:lagrangian-solve-intersection}, and $S^-,S^+$ are obtained by two lexicographic
    weighted-matroid-intersection calls over $\scf_{\vec k}$ (optimize $\phi_{\lambda_{\vec k}^\star}$, and among ties
    minimize/maximize $s_{\tr}w(S)$). Next, Lemma~\ref{lem:exchange-path} yields an explicit sequence
    $S^-=S_0,\ldots,S_\ell=S^+$ inside $\sco_{\vec k}$ with one-per-bin exchanges; scanning this sequence finds
    an index $j$ with $v(S_j)\le \LR_{\vec k}\le v(S_{j+1})$, and we output $S^\star:=S_{j+1}$.
\end{proof}

\begin{algorithm}[hbt]
  \caption{EPTAS for \pmol}
  \label{alg:eptas-pmol}
  \begin{algorithmic}[1]
    \Statex \textbf{Input:} An instance $\sca_\scp = (E,\sci,v,w,L)$, a descriptor $\scp = (\opt,\scf,\triangleright)$ and $\eps > 0$.
    \Statex \textbf{Output:} A feasible set $S$ which is $(1\pm\eps)$-approximation of $\OPT$.
    \State Compute a feasible set $\hat S$ and set $\hat V := v(\hat S)$, as guaranteed by Lemma~\ref{lem:Vhat}.
    \State Define the buckets $B_0,\dots,B_T \subseteq E$.
    \State $\mathcal{K} \gets \bigl\{\vec k=(k_0,\dots,k_T)\in\mathbb{Z}_{\ge0}^{T+1} :
           \sum_{i=1}^T k_i \le \lceil 2/\eps\rceil,\;
           \sum_{i=0}^T k_i \le n \bigr\}$.
    \State $S^{\mathrm{best}} \gets \hat S$
    \For{$\vec k \in \mathcal{K}$}
        \State Construct the partition matroid $\scm^P_{\vec k}$ from $\vec k$ and define $\scf_{\vec k} \subseteq 2^E$.
        \State Compute $\lambda^\star_{\vec{k}}$ and $\LR_{\vec{k}}$ (Lemma~\ref{lem:lagrangian-solve-intersection}), and compute two sets $S^-_{\vec{k}},S^+_{\vec{k}}\in\sco_{\vec{k}}$ as in Lemma~\ref{lem:structural} (with $s_{\tr}(w(S^-_{\vec{k}})-L)\le 0 \le s_{\tr}(w(S^+_{\vec{k}})-L)$).
        
        \State Using Lemmas~\ref{lem:spread-bound},~\ref{lem:exchange-path} and~\ref{lem:structural},
       compute $S^\star_{\vec{k}} \in \scf_{\vec{k}}$ with $w(S^\star_{\vec{k}}) \triangleright L$ and
       $s_{\opt}\bigl( \LR_{\vec{k}} - v(S^\star_{\vec{k}}) \bigr) \le \Sigma_\Delta(\vec{k})$.
      \If{$\opt = \min$ and $v(S^\star_{\vec k}) < v(S^{\mathrm{best}})$}
        \State $S^{\mathrm{best}} \gets S^\star_{\vec k}$
      \EndIf
      \If{$\opt = \max$ and $v(S^\star_{\vec k}) > v(S^{\mathrm{best}})$}
        \State $S^{\mathrm{best}} \gets S^\star_{\vec k}$
      \EndIf
    \EndFor
    \State \Return $S^{\mathrm{best}}$
  \end{algorithmic}
\end{algorithm}

Our main result is the following.

\begin{theorem}
\label{thm:ptas-pmol}
For every fixed $\eps > 0$, there exists an EPTAS for the \pmol problem.
Given an instance $\sca_\scp$, the algorithm runs in time $|E|^{O(1)} \cdot (1/\eps^2)^{O(1/\eps)}$ and outputs a feasible solution $S$ satisfying
\[
  \opt = \min \;\Longrightarrow\; v(S) \le (1+\eps)\,\OPT,
  \qquad
  \opt = \max \;\Longrightarrow\; v(S) \ge (1-\eps)\,\OPT.
\]
\end{theorem}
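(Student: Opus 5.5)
The theorem follows by assembling the lemmas already proved for Algorithm~\ref{alg:eptas-pmol}: feasibility, the approximation guarantee via the optimal configuration $\vec{k}^\opt$, and the running time. I will treat the case $\opt=\min$ in detail; the case $\opt=\max$ is symmetric, with the inequalities reversed.

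\textbf{Feasibility.} The returned set $S^{\mathrm{best}}$ is either $\hat S$ or some $S^\star_{\vec k}$.
\begin{itemize}
\item $\hat S$ is feasible by Lemma~\ref{lem:Vhat}.
\item Each $S^\star_{\vec k}$ lies in $\scf_{\vec k}$ and satisfies $w(S^\star_{\vec k})\tr L$ by Lemma~\ref{lem:structural}.
\item Elements of $\scf_{\vec k}$ are feasible for $\scf$. In the $\IS$ case they are independent in $\scm$. In the $\bases$ case we only keep configurations with $K=\rank(\scm)$, so every $K$-element independent set is a base.
\end{itemize}
In the degenerate case $\lambda^\star_{\vec k}=0$, only a feasible $S^0_{\vec k}$ is ever added.

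\textbf{Approximation.} Consider the iteration $\vec k=\vec k^\opt$. It is enumerated by Lemma~\ref{lem:few-large}, and it satisfies $\scf_{\vec k}\neq\emptyset$ and $\OPT_{\vec k}=\OPT$.

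If $\lambda^\star_{\vec k}>0$, Lemma~\ref{lem:structural} and Lemma~\ref{lem:lagrangian-bound-k} give
\[
v(S^\star_{\vec k})\le \LR_{\vec k}+\Sigma_\Delta(\vec k)\le \OPT+8\eps\OPT,
\]
where the last step uses Lemma~\ref{lem:spread-bound}. Since $S^{\mathrm{best}}$ is the best candidate found, $v(S^{\mathrm{best}})\le(1+8\eps)\OPT$; for $\opt=\max$ one likewise gets $v(S^{\mathrm{best}})\ge(1-8\eps)\OPT$. Running the algorithm with $\eps/8$ in place of $\eps$ gives the stated factor, and this changes the running time only by constants in the exponent.

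\textbf{The case $\lambda^\star_{\vec k}=0$.} This is the point I expect to need care. Here $V_{\vec k}(0)=\opt_{S\in\scf_{\vec k}}v(S)$. Take $\opt=\min$: $V_{\vec k}$ is concave, and $\LR_{\vec k}$ is its maximum, attained at $0$. So $\LR_{\vec k}=V_{\vec k}(0)\le\OPT_{\vec k}$.

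I would argue that some unconstrained optimum is feasible by reusing the perturbation argument of Lemma~\ref{lem:structural}. If every set in $\sco_{\vec k}$ had $s_{\tr}(w(S)-L)<0$, then increasing $\lambda$ slightly would raise $V_{\vec k}$, since the Lagrangian penalty enters with the matching sign. That contradicts the optimality of $\lambda^\star=0$. Hence the tie-broken $S^0_{\vec k}$, which maximizes $s_{\tr}(w(S)-L)$ among unconstrained optima, is feasible. Its value equals $\LR_{\vec k}\le\OPT$, so it is even optimal.

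If the algorithm does not already add $S^0_{\vec k}$ as a candidate in this case, I would simply treat it as part of line~8. Alternatively, one can run the exchange argument of Lemma~\ref{lem:structural} unchanged, since with $\lambda^\star=0$ the construction of $S^-,S^+$ and the exchange path still goes through.

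\textbf{Running time.} Computing $\hat V$ takes polynomial time (Lemma~\ref{lem:Vhat}). By Lemma~\ref{lem:num-configs}, there are $O(|E|(1/\eps^2)^{2/\eps})$ configurations. For each configuration:
\begin{itemize}
\item Lemma~\ref{lem:lagrangian-solve-intersection} gives $\lambda^\star_{\vec k}$ in polynomial time.
\item Two lexicographic weighted matroid intersection calls give $S^-$ and $S^+$.
\item Lemma~\ref{lem:exchange-path} gives a path of length at most $K\le|E|$, with each step computed in polynomial time.
\end{itemize}
The total is $|E|^{O(1)}(1/\eps^2)^{O(1/\eps)}$, which is an EPTAS.
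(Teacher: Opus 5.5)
Your proposal is correct and follows the paper's proof. You fix $\vec k^\opt$, chain Lemmas~\ref{lem:structural}, \ref{lem:lagrangian-bound-k} and~\ref{lem:spread-bound} to get $(1\pm 8\eps)\OPT$, rescale $\eps$ by $8$, and bound the running time via Lemma~\ref{lem:num-configs}. Your extra argument for $\lambda^\star_{\vec k}=0$ (the right derivative of $V_{\vec k}$ at $0$ forces some unconstrained optimum to satisfy $w(S)\tr L$) is sound. It also fills a small hole the paper leaves, since Lemma~\ref{lem:structural} is only established under $\lambda^\star_{\vec k}>0$.

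Drop the ``alternatively, run Lemma~\ref{lem:structural} unchanged'' remark. Its final step infers $s_{\tr}(w(S_{j+1})-L)\ge 0$ by dividing by $\lambda^\star_{\vec k}$, which fails at $\lambda^\star_{\vec k}=0$; there you should simply take $S^+$.
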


\begin{proof}
We first describe our EPTAS for the \pmol class (see \cref{alg:eptas-pmol}).
Using a constant-factor estimate $\hat V$ of $\OPT$, we first bucket the elements by their $v$-values so that elements within the same bucket have nearly equal value. We then enumerate all feasible configuration vectors $\vec k$; each such $\vec k$ defines a partition matroid $\scm^P_{\vec k}$ and a restricted family $\scf_{\vec k}$. For every configuration, we solve the Lagrangian relaxation of the linear constraint over $\scf_{\vec k}$ to obtain $\lambda^\star_{\vec k}$, $\LR_{\vec k}$ and the set $\sco_{\vec k}$ of Lagrangian-optimal solutions. Finally, using the adjacency property of $\sco_{\vec k}$ together with a bound on the per-bucket value spread, we find a set $S^\star_{\vec k}\in\scf_{\vec k}$ that satisfies the original linear constraint and is within $\Sigma_\Delta(\vec k)$ of the Lagrangian bound. The algorithm returns the best such solution over all configurations.

We now proceed to prove the approximation guarantee of \cref{alg:eptas-pmol}.
Let $S^\opt$ be an optimal solution and $\vec{k}^\opt$ its configuration vector. Lemma~\ref{lem:few-large} establishes that $\vec{k}^\opt$ satisfies Eq.~\eqref{eq:config-constraints}, so the algorithm considers this vector.
In that iteration, we get $\OPT_{\vec{k}^\opt} = \OPT$. It holds that
\[
  \opt = \min \;\Rightarrow\;
    v(S^\star_{\vec{k}^\opt})
    \le \LR_{\vec{k}^\opt} + \Sigma_\Delta(\vec{k}^\opt)
    \le \OPT + \Sigma_\Delta(\vec{k}^\opt) \le \OPT + 8\eps\OPT = (1+8\eps)\OPT,
\]
\[
  \opt = \max \;\Rightarrow\;
    v(S^\star_{\vec{k}^\opt})
    \ge \LR_{\vec{k}^\opt} - \Sigma_\Delta(\vec{k}^\opt)
    \ge \OPT - \Sigma_\Delta(\vec{k}^\opt) \ge \OPT - 8\eps\OPT = (1-8\eps)\OPT,
\]
where the first inequality in each row is by Lemma~\ref{lem:structural}, the second is due to Lemma~\ref{lem:lagrangian-bound-k} and the last holds by Lemma~\ref{lem:spread-bound}. Therefore, running the algorithm with parameter $\eps/8$ internally, we obtain the claimed $(1\pm\eps)$ guarantees. The feasibility of $S^\star_{\vec{k}^\opt}$ is immediate due to Lemma~\ref{lem:structural}.

The running time is dominated by the enumeration of configuration vectors (Lemma~\ref{lem:num-configs}) and, for each $\vec{k}$, a polynomial number of calls to (cardinality-constrained) weighted matroid intersection and local exchange operations.
This yields a total running time of $|E|^{O(1)} \cdot (1/\eps^2)^{O(1/\eps)}$, as required.
\end{proof}

\section{Hardness Proof for MIC}
\label{sec:hardness}
		
\begin{theorem}
\label{thm:MIC}
    The \textnormal{\sc Matroid Intersection Cover} problem admits no \textnormal{EPTAS}, unless $W[1]$ = \textnormal{FPT}.
\end{theorem}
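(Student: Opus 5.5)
We reduce from $k$-\textsc{Clique} (W[1]-hard when parameterized by $k$), following the standard template for ruling out an EPTAS. We build, in time $f(k)\cdot\mathrm{poly}(|G|)$, an MIC instance whose optimum is a number $N=\mathrm{poly}(k)$ if $G$ has a $k$-clique and at least $N+1$ otherwise. An EPTAS run with $\varepsilon<1/(N+1)$ then separates the two cases in time $h(\mathrm{poly}(k))\cdot\mathrm{poly}(|G|)$, which is an FPT algorithm for $k$-\textsc{Clique}, so $W[1]=\mathrm{FPT}$. The whole difficulty is therefore to construct an instance in which $\OPT$ is polynomially bounded in $k$ and integral, while the exact numeric information about the clique is pushed into the \emph{covering} constraint $w(S)\ge L$. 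There, as in the knapsack-style constructions of~\cite{DKS24}, the weights may be exponentially large in $k$ or encode $\log n$-bit identifiers.

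\textbf{Construction.} Let $G=(V,E_G)$ with $V=\{1,\dots,n\}$. The ground set will contain, for each ordered pair of slots $(a,b)$ with $1\le a<b\le k$ and each edge $\{u,x\}\in E_G$, an element $e_{a,b,u,x}$ meaning ``slot $a$ holds $u$ and slot $b$ holds $x$'' (in both orientations). Matroid $\scm_1$ is a partition matroid allowing at most one element per slot pair $(a,b)$. The two-matroid structure has to enforce that the $k-1$ elements touching slot $a$ agree on the vertex chosen for $a$. A second partition matroid alone cannot enforce equality, so I will move the consistency check into the linear constraint with a balanced-weight encoding in the spirit of~\cite{DKS24}. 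For each slot $a$, each chosen element contributes $+\mathrm{id}(u)$ when it is the ``first'' occurrence of slot $a$, and contributions of the form $M^{a}-\mathrm{id}(u)$ (suitably scaled per slot so that no carries occur between slots) for the other occurrences.

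If this single-sided encoding is problematic, my fallback is the following. Use $\scm_2$ to pair up the occurrences of each slot. Introduce elements that are tagged with both $u$ and a complementary value $n-u$, and give them weight $w=\sum_a M^{a}(\ldots)$. Then $w(S)\ge L$ with $L$ chosen at the maximum achievable weight forces every slot's identifiers to sum to exactly the target, and hence to be consistent. To make a covering inequality act like an equality, I will choose $w$ so that the maximum possible weight of any common independent set of size $\binom{k}{2}$ equals $L$, and is attained only by consistent selections. I will set $v_e=1$ for all elements. Then $\OPT=\binom{k}{2}$ exactly when a clique exists. Otherwise, either no feasible set exists, or every feasible set must use more elements; I will add dummy high-weight elements of $v$-value $1$ to guarantee feasibility with value at least $\binom{k}{2}+1$.

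\textbf{Main obstacle.} The key steps, in order, are: (i) fix the slot/edge gadget and the two partition-like matroids; (ii) design $w$ and $L$ so that $w(S)\ge L$ with $|S|=\binom{k}{2}$ forces vertex consistency; (iii) prove completeness, that is, a clique gives a feasible set of value $\binom{k}{2}$; (iv) prove soundness, that is, every feasible set has value at least $\binom{k}{2}+1$ when no clique exists; (v) derive the FPT consequence. I expect (ii) and (iv) to be the hard part. A covering constraint can only push weight up, so I must rule out feasible sets that ``overshoot'' $L$ by using inconsistent but heavier identifiers. The natural plan is to pair each identifier $u$ with its complement and rely on the matroid constraints to ensure that each slot contributes exactly one $u$ term and one complement term. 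With that, the total per slot is at most the target, with equality iff the two terms match, and the base-$M$ separation prevents compensation across slots. Verifying that matroid intersection (rather than three matroids) suffices to enforce this pairing is the delicate point. I would first try orienting each slot pair so that $\scm_2$ is a partition matroid on ``slot $a$ as left endpoint'' classes, and check the counting on that choice.
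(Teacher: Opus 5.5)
\textbf{Genuine gap: unit values make the instance tractable.} Your template is the right one, and the paper uses it with $N=k$ and $\varepsilon=1/(2k)$: build an instance with optimum $N=\mathrm{poly}(k)$ when $G$ has a $k$-clique and at least $N+1$ otherwise, then run the EPTAS with $\varepsilon<1/(N+1)$. Your construction, however, fails structurally, because you set $v_e=1$ for every element. Then $v(S)=|S|$, and MIC is exactly solvable in polynomial time for any two matroids $(E,\mathcal{I}_1)$, $(E,\mathcal{I}_2)$ with polynomial-time independence oracles, in particular your partition matroids. For each $\ell=0,1,\ldots,|E|$, compute $g(\ell)=\max\{w(S): S\in\mathcal{I}_1\cap\mathcal{I}_2,\ |S|=\ell\}$ by cardinality-constrained weighted matroid intersection, and return the least $\ell$ with $g(\ell)\ge L$.

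So no choice of $w$, $L$, slot gadgets or dummy elements can make your soundness step (iv) hold. If it did, your reduction followed by this algorithm would decide $k$-Clique exactly in polynomial time. Your own plan shows the problem. If $L$ equals the maximum weight of a common independent set of size $\binom{k}{2}$, attained only by consistent selections, then a clique can be read off a maximum-weight common independent set of that size, which is a polynomial-time computation. A covering constraint behaves like an equality only when the objective pushes the weight down. With uniform $v$ nothing does, so the base-$M$ consistency encoding has no leverage.

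\textbf{What the paper does instead.} You are missing two ideas. First, put the combinatorial hardness inside one of the matroids rather than into the weights. MIC allows arbitrary matroids given by independence oracles, so you are not confined to partition matroids. Second, make $v$ and $w$ pull against each other.

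The paper takes two copies $v_i,u_i$ of each vertex, both labelled $f(v_i)=f(u_i)=i$. $M_2$ is the partition matroid allowing at most one copy per vertex. $M_1$ is the rank-$n$ matroid in which every set of size less than $n$ is independent, and an $n$-set is dependent exactly when it contains one copy of every vertex, exactly $k$ copies from $V$, and its $V$-part is not a clique. This is a matroid because two such non-bases never differ by a single swap: the only label-preserving swap is $v_i\leftrightarrow u_i$, and it changes the number of $V$-copies.

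The weights and threshold are:
\begin{itemize}
\item $v=1$ on $V$ and $0$ on $U$;
\item $w(x)=1+f(x)/(2n^2)$ on $V$ and $w(x)=f(x)/(2n^2)$ on $U$;
\item $L=k+\frac{n(n+1)}{4n^2}$.
\end{itemize}
The cover constraint forces at least $k$ copies from $V$, while the objective pays for each of them. A solution of value exactly $k$ must therefore have $w(S)=L$, hence $|S|=n$ with one copy of every vertex. Its independence in $M_1$ then certifies that $S\cap V$ is a $k$-clique. This gives $\mathrm{OPT}=k$ versus $\mathrm{OPT}\ge k+1$, which is exactly the gap your template needs.
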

	
\noindent	{\em Proof.}
We use a reduction from the $k$-clique problem. Given a graph $G = (V,E)$ and $k \in \mathbb{N}$ we need to decide if there is a clique of size $k$ in $G$. 
Let $n = |V|$, and denote by $C(G)$ the set of all cliques in $G$. We define below a MIC instance, where we assume, without loss of generality, that $0<k \leq n$ and $n \geq 1$; otherwise, the problem becomes trivial. 

Let $V = \{v_1, \ldots, v_n\}$, and for each vertex $v_i \in V$ define a copy $u_i$, forming the set $U = \{u_1, \ldots, u_n\}$, where $U \cap V = \emptyset$.  
We refer to $u_i$ and $v_i$ as the two \emph{copies} of the same vertex.  
Define a function $f : V \cup U \to [n]$ satisfying the following properties:
\begin{enumerate}
    \item The restriction of $f$ to $V$ is a bijection onto $[n]$.
    \item For every $i \in [n]$, it holds that $f(u_i) = f(v_i)$.
\end{enumerate}

For a set $T\subseteq V\cup U$, we write $f(T):= \left\{ f(x) \mid x\in T \right\}$ for the image of $T$ under $f$. We also write $\sigma_f(T):= \sum_{x\in T} f(x)$ for the sum of the labels of the elements in $T$. Using the function $f$, we define a matroid $M_1 = (V \cup U, \scf_1)$, where the independent sets are given by $\scf_1 = \sci_1 \cup \sci_2 \cup \sci_3 \cup \sci_4$, with the four families defined as follows: \begin{equation}
\label{eq:independentSetsDef}
\begin{aligned}
    \sci_1 = {} & \{T \subseteq V \cup U~\big|~ |T| < n\} \\
    \sci_2 = {} & \{T \subseteq V \cup U~\big|~ |T| = n, f(T) \neq f(V)\} \\
    \sci_3 = {} & \{T \subseteq V \cup U~\big|~ |T| = n, |T \cap V| \neq k\}. \\
    \sci_4 = {} & \{T \subseteq V \cup U~\big|~ |T| = n, |T \cap V| = k, f(T) = f(V), T \cap V \in C(G)\}. \\
\end{aligned}
\end{equation}
		
\begin{claim}
\label{claim:M1IsMatroid}
    $M_1$ is a matroid.
\end{claim}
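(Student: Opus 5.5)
The plan is to verify the three matroid axioms directly. The structure makes this manageable: every set of size less than $n$ is independent, and the only dependent sets of size at most $n$ are certain \emph{non-bases} of size exactly $n$. Concretely, a set $T$ with $|T|=n$ lies outside $\scf_1$ if and only if $f(T)=f(V)$, $|T\cap V|=k$, and $T\cap V\notin C(G)$. Call such sets \emph{non-bases}. Since $\scf_1$ contains no set of size larger than $n$, $M_1$ should be a (sparse) paving matroid. The key fact to establish is that two distinct non-bases cannot share $n-1$ elements.

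The easy axioms come first. We have $\emptyset\in\sci_1$ because $n\ge 1$. For the hereditary property, let $T\in\scf_1$ and $T'\subseteq T$. If $T'=T$ there is nothing to show. Otherwise $|T'|<|T|\le n$, so $T'\in\sci_1$.

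For the exchange property, take $A,B\in\scf_1$ with $|A|>|B|$. Then $|B|\le n-1$.
\begin{itemize}
\item If $|B|<n-1$, any $e\in A\setminus B$ gives $|B+e|<n$, so $B+e\in\sci_1$.
\item So assume $|B|=n-1$ and $|A|=n$. If $|A\setminus B|=1$, then $B\subseteq A$ and the unique $e\in A\setminus B$ gives $B+e=A\in\scf_1$.
\end{itemize}

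The remaining case, which is the main step, is $|A\setminus B|\ge 2$. Suppose for contradiction that for two distinct elements $e_1,e_2\in A\setminus B$ both $N_1=B+e_1$ and $N_2=B+e_2$ are non-bases. Since $f(N_1)=f(V)=[n]$ and $|N_1|=n$, the set $N_1$ contains exactly one element with each label. Hence $B$ carries $n-1$ distinct labels and misses exactly one label $j$, and $f(e_1)=j$. By the same argument $f(e_2)=j$. The only elements with label $j$ are $u_j$ and $v_j$, so $\{e_1,e_2\}=\{u_j,v_j\}$. Consequently $|N_1\cap V|$ and $|N_2\cap V|$ differ by exactly one. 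But both must equal $k$, a contradiction. Therefore at most one $e\in A\setminus B$ makes $B+e$ dependent, and since $|A\setminus B|\ge 2$, some $e\in A\setminus B$ has $B+e\in\scf_1$.

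The delicate point is exactly this labeling argument: identifying that the missing label forces $\{e_1,e_2\}=\{u_j,v_j\}$, which in turn breaks the condition $|T\cap V|=k$. Everything else is routine size bookkeeping.
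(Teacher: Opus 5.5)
Your proof is correct and follows essentially the same route as the paper: the same case split on $|B|$ and on whether $B\subseteq A$, with the final case settled by the same labeling observation. Two full-label extensions $B+e_1,B+e_2$ force $f(e_1)=f(e_2)$, so $e_1,e_2$ are the two copies of one vertex and $|(B+e_1)\cap V|\neq|(B+e_2)\cap V|$; you phrase this as a contradiction, whereas the paper splits directly on $f(x)\neq f(y)$ versus $f(x)=f(y)$. You also handle the case $T'=T$ in the hereditary property, which the paper glosses over, and your only slip is cosmetic: the preimage of label $j$ is $\{u_i,v_i\}$ with $f(v_i)=j$, not necessarily $\{u_j,v_j\}$.
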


\begin{proof}
    We verify the three matroid axioms. First, $\emptyset \in \sci_1$ since $0 < n$, and therefore $\emptyset \in \scf_1$. Second, for the hereditary property, let $A \in \scf_1$ and let $B \subseteq A$. Since $|B| < |A| \leq n$, we have $|B| < n$, hence $B \in \sci_1 \subseteq \scf_1$. Third, for the exchange property, let $A,B \in \scf_1$ such that $|A| > |B|$. We distinguish cases based on the size of $B$.
    \begin{itemize}
        \item $|B| < n-1$. Then, for all $e \in A \setminus B$ it holds that $|B+e| < (n-1)+1 = n$, and hence $B+e \in \sci_1 \subseteq \scf_1$. Such an element $e \in A \setminus B$ exists because $|A| > |B|$.

        \item $|B| = n-1$ and $|A| = n$. We further distinguish two subcases.
        \begin{enumerate}
            \item $B \subseteq A$. Then, $A \setminus B = \{e\}$ for some $e$, and thus $B+e = A \in \scf_1$.
                        
            \item $B \not\subseteq A$. Since $|B| = n-1$ and $|A| = n$, we have $|B \cap A| \leq |B|-1 = n-2$. Hence, $|A \setminus B| = |A| - |A \cap B| \geq n - (n-2) = 2$, so there exist $x,y \in A \setminus B$, such that $x \neq y$. We now consider two cases according to their images under $f$.
            \begin{enumerate}
                \item $f(x) \neq f(y)$. Therefore, there is $q \in \{x,y\}$ such that $f(B+q) \neq f(V)$. We conclude that $B+q \in \sci_2 \subseteq \scf_1$, as required.
                
                \item $f(x) = f(y)$. By the definition of $f$, there exists $i \in [n]$ such that $\{x,y\} = \{v_i, u_i\}$ (or vice versa). Hence, adding $x$ or $y$ to $B$ changes the number of selected vertices from $V$, i.e., $|(B+x) \cap V| \neq |(B+y) \cap V|$. Thus, it follows that for at least one $q \in \{x,y\}$, $|(B+q) \cap V| \neq k$, implying $B+q \in \sci_3 \subseteq \scf_1$.
            \end{enumerate}
        \end{enumerate}
    \end{itemize}
\end{proof}

We now define a second matroid, $M_2 = (U \cup V, \scf_2)$, by
\begin{equation}
	\scf_2 = \{T \subseteq V \cup U~|~ |T \cap \{u_i, v_i\}| \leq 1~\forall i \in [n]\}.
\end{equation}
Note that $M_2$ is a partition matroid, in which any independent set contains at most one copy of each vertex.
We define a weight function $w : (U \cup V) \to \drz$ by
\begin{equation*}
	w(x) = 
	\begin{cases}
		1+\frac{f(x)}{2n^2},&  x \in V, \\
		\frac{f(x)}{2n^2} &  x \in U. 
	\end{cases}
\end{equation*}
and a valuation function $v : (U \cup V) \to \drz$, given by
\begin{equation*}
	v(x) = 
	\begin{cases}
		1,&  x \in V, \\
		0 &  x \in U. 
	\end{cases}
\end{equation*}
Finally, we set the threshold $L = k+\frac{\sigma_f(V)}{2n^2}$ and define the MIC instance $\sca = (V\cup U,\sci,v,w,L)$, where $\sci = \scf_1 \cap \scf_2$.

\begin{claim}
\label{claim:cliqueIff}
    $G$ contains a clique of size $k$ if and only if $\sca$ admits a feasible solution of value $k$. 
\end{claim}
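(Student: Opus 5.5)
We prove the two directions separately, each resting on one structural fact. In any $T\in\scf_2$ we have $|T\cap\{u_i,v_i\}|\le 1$ for every $i$. Hence $f$ is injective on $T$, and $\sigma_f(T)\le \sigma_f(V)$, with equality if and only if $f(T)=f(V)$, i.e., $T$ contains exactly one copy of every vertex (so $|T|=n$). Moreover
\[
w(T)=|T\cap V|+\frac{\sigma_f(T)}{2n^2},\qquad v(T)=|T\cap V|,
\]
and $0\le \sigma_f(T)/(2n^2)\le \frac{n(n+1)}{4n^2}<1$ for $n\ge1$. So the integer part of $w(T)$ is $|T\cap V|=v(T)$, and the fractional part is $\sigma_f(T)/(2n^2)$.

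($\Rightarrow$) Let $Q$ be a clique with $|Q|=k$, and set $T:=Q\cup\{u_i: v_i\notin Q\}$. Then $T$ contains exactly one copy of each vertex, so $T\in\scf_2$, $|T|=n$, and $f(T)=f(V)$. Since $|T\cap V|=k$ and $T\cap V=Q\in C(G)$, we get $T\in\sci_4\subseteq\scf_1$. Finally $w(T)=k+\sigma_f(V)/(2n^2)=L$ and $v(T)=k$, so $T$ is a feasible solution of value $k$.

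($\Leftarrow$) Let $T\in\scf_1\cap\scf_2$ with $w(T)\ge L$ and $v(T)=k$. Then $|T\cap V|=k$, so $w(T)=k+\sigma_f(T)/(2n^2)\ge k+\sigma_f(V)/(2n^2)$, which forces $\sigma_f(T)\ge\sigma_f(V)$. By the structural fact, $f(T)=f(V)$ and $|T|=n$. Now $T\notin\sci_1$ because $|T|=n$, $T\notin\sci_2$ because $f(T)=f(V)$, and $T\notin\sci_3$ because $|T\cap V|=k$. Therefore $T\in\sci_4$, so $T\cap V$ is a clique of size $k$.

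The main point needing care is the structural fact. Injectivity of $f$ on $T$ follows from $\scf_2$ together with the two listed properties of $f$. Given injectivity, $f(T)\subseteq[n]=f(V)$ implies $\sigma_f(T)\le\sigma_f(V)$, with equality only if $f(T)=[n]$. A further remark: feasibility alone gives $w(T)\ge L>k$, and since $w(T)<|T\cap V|+1$ this implies $|T\cap V|\ge k$. Hence every feasible solution has value at least $k$, so "value $k$" in the claim means the optimum equals $k$. The value exactly $k$ is what the argument above uses, via $|T\cap V|=k$.
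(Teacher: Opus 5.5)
Your proof is correct and follows essentially the same route as the paper: the forward direction uses the same set $Q\cup\{u_i : v_i\notin Q\}$, and the converse uses $\scf_2$ to make $f$ injective on $T$. From there you force $\sigma_f(T)=\sigma_f(V)$, hence $|T|=n$ and $f(T)=f(V)$, and then exclude $\sci_1,\sci_2,\sci_3$ so that $T\in\sci_4$. The only difference is cosmetic: you state the injectivity and sum bound up front, whereas the paper first derives $w(Q)\le L$ and then uses $w(Q)\ge L$ to get equality.
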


\begin{proof}
	Let $Q_V$ be a clique of size $k$ in $G$. Define $Q_U = \{u_i~|~i \in [n], v_i \notin Q_V\}$ and $K = Q_V \cup Q_U$. By construction, $K$ contains exactly one copy of each vertex, either $v_i$ or $u_i$. Therefore, $K \cap V = Q_V \in C(G)$, $|K \cap V| = |Q_V| = k$, and $f(K) = f(V)$, since $f(u_i) = f(v_i)$ for every $i \in [n]$ (and $K$ contains exactly one copy from each vertex, either from $U$ or from $V$).
    
    Thus, by \eqref{eq:independentSetsDef} we have that $K \in \sci_4$ implying that $K \in \scf_1$. Moreover, since $K$ includes exactly one element from each pair $\{u_i,v_i\}$, it also satisfies the partition constraint and hence $K \in \scf_2$. We therefore have $K \in \scf_1 \cap \scf_2$. In addition, the weight of $K$ satisfies
    \begin{equation*}
        w(K) = w(Q_V) + w(Q_U) = k+\frac{\sigma_f(Q_V)}{2n^2}+\frac{\sigma_f(Q_U)}{2n^2} = k+\frac{\sigma_f(V)}{2n^2} = L,
    \end{equation*}
    where we use that $|Q_V| = k$, $f(u_i)=f(v_i)$ for all $i \in [n]$, and $K$ contains exactly one copy of each vertex. Hence, $K$ is a feasible solution for $\sca$ with value $v(K) = |K \cap V| = |Q_V| = k$. 
	
	For the other direction, let $Q$ be a solution for $\sca$ with a value $k$. By the definition of the valuation function $v$, it follows that $|Q \cap V| = k$.
	In addition, the total weight of $Q$ is bounded by 
	\begin{equation}
		\label{eq:3}
		w(Q) = k+\frac{\sigma_f(Q)}{2n^2} \leq k+\frac{\sigma_f(V)}{2n^2} = L. 
	\end{equation}
    The first equality follows from $|Q \cap V| = k$, and the inequality holds because $Q \in \scf_2$, so $Q$ contains at most one of $\{v_i, u_i\}$ for each $i \in [n]$, so $f$ contributes at most once per vertex of $V$. Since $Q$ is feasible for $\sca$, it must also satisfy $w(Q) \ge L$, and therefore $w(Q) = L$. Combined with \eqref{eq:3}, this implies $\sigma_f(Q) = \sigma_f(V)$.
    
    Now, notice that if $|Q| < n$, then $\sigma_f(Q) < \sigma_f(V)$, since $M_2$ ensures that each vertex contributes at most once. Thus we must have $|Q| = n$. We have now established that 
    \[
    |Q \cap V| = k, \qquad |Q| = n, \qquad \sigma_f(Q) = \sigma_f(V).
    \]
    Since \(Q \in \mathcal{F}_2\), it contains at most one element from each pair \(\{u_i,v_i\}\), and therefore the labels in \(f(Q)\) are distinct. As \(|Q|=n\) and \(\sigma_f(Q)=\sigma_f(V)=1+\cdots+n\), it follows that \(f(Q)=f(V)\). Since \(Q\) is feasible, it belongs to one of the families defining \(\mathcal{F}_1\); the only possible family is $\sci_4$. By \eqref{eq:independentSetsDef}, we conclude that \(Q\cap V\in C(G)\). Hence, \(Q\cap V\) is a clique of size \(k\) in \(G\).
\end{proof}

To complete the proof, assume toward a contradiction that there exists an EPTAS for MIC. Using this scheme, we can decide whether $G$ contains a clique of size $k$ in FPT time. Consider the instance $\sca$ constructed above, and run the EPTAS with $\eps = \frac{1}{2k}$. Observe that any feasible solution for $\sca$ must include at least $k$ vertices from $V$. The weight constraint encoded by $w$ and $L$ enforces this requirement, as each vertex from $V$ contributes $1$ to the value function $v$.

By Claim~\ref{claim:cliqueIff}, $G$ has a clique of size $k$ if and only if $\sca$ admits a feasible solution of value exactly $k$. If such a solution for $\sca$ exists, then the EPTAS would yield a solution whose value is at most $(1+\eps)k = k + \frac{1}{2} < k+1$. Since the valuation function $v$ is integral, the only integer value below $k+1$ is $k$, and therefore the scheme must return a solution of value exactly $k$. Thus, the EPTAS correctly determines whether $\sca$ has a solution of value $k$ and, by Claim~\ref{claim:cliqueIff}, whether $G$ contains a clique of size $k$.

The running time of the EPTAS is $g(\frac{1}{\eps}) \cdot m^{O(1)}$ for some computable function $g$, where $m$ is the encoding size of the $k$-clique instance. Since $\eps = \Theta(\frac{1}{k})$, our reduction decides the $k$-clique problem in time $g(k)\cdot m^{O(1)}$.
As the $k$-clique problem is known to be W[1]-Hard \cite{downey1995fixed}, it implies that there is no EPTAS for MIC unless $W[1]=\text{FPT}$. \hfill$\square$

\section{Discussion}

In this paper, we present tight approximation results for a broad class of matroid optimization problems with a linear constraint, along with a hardness result for MIC. Our unified EPTAS for all \pmol problems resolves the complexity status of this class,
and the hardness of MIC suggests that such a unified EPTAS is unlikely to extend to \emph{all} \pmol variants once the single matroid constraint is replaced by a \emph{matroid intersection} constraint. 



A natural direction for future work is to characterize the approximability landscape of \pmol variants under matroid intersection constraints. In particular, the existence of a PTAS for MIC remains open.

Finally, it would be interesting to identify settings in which stronger guarantees are attainable. For instance, for graphic matroids and for linear matroids given by an explicit representation, we obtain an EPTAS, while the existence of an FPTAS in these cases remains open.


%
%
%
%
%
%
%

\appendix
\bibliographystyle{plain}
\bibliography{bibliography}

\end{document}